\newtheorem{Theorem}{Theorem}[section]
\newtheorem{lem}[Theorem]{Lemma}
\newtheorem{Remark}[Theorem]{Remark}
\newtheorem{Definition}[Theorem]{Definition}
\newtheorem{Corollary}[Theorem]{Corollary}
\newtheorem{Proposition}[Theorem]{Proposition}
\newtheorem{Example}[Theorem]{Example}
\numberwithin{equation}{section}
\numberwithin{table}{section}
\begin{document}

\title{  New bounds for $b$-Symbol Distances of Matrix Product Codes
\footnote{E-Mail addresses: panxucode@163.com (X. Pan), lingsan@ntu.edu.sg (S. Ling),
hwliu@mail.ccnu.edu.cn (H. Liu).}}

\author{Xu Pan$^1$,~San Ling$^2$,~Hongwei Liu$^3$}
\date{\small
$^1$School of Information Science and Technology/Cyber Security,
Jinan University, Guangzhou, Guangdong, 510632, China\\
$^2$School of Physical and Mathematical Sciences, Nanyang Technological University, 637371, Singapore\\
$^3$School of Mathematics and Statistics, Central China Normal University, Wuhan, Hubei, 430079, China\\
}
\maketitle

\begin{abstract}
Matrix product codes are generalizations of some well-known constructions of codes, such as Reed-Muller codes, $[u+v,u-v]$-construction, etc.  Recently, a bound for the symbol-pair distance of a matrix product code was given in \cite{LEL}, and new families of MDS symbol-pair codes were constructed by using this bound. In this paper, we generalize this bound to the $b$-symbol distance of a matrix product code and determine all minimum $b$-symbol distances of Reed-Muller codes. We also give a bound for the minimum $b$-symbol distance of codes obtained from the $[u+v,u-v]$-construction, and use this bound to construct some $[2n,2n-2]_q$-linear $b$-symbol almost MDS codes with arbitrary length. All the minimum $b$-symbol distances of $[n,n-1]_q$-linear codes and $[n,n-2]_q$-linear codes for $1\leq b\leq n$ are determined. Some examples are presented to illustrate these results.

\medskip
\textbf{Keywords}: Matrix product code, $b$-symbol distance, $b$-symbol MDS code, Reed-Muller code.

\medskip
\textbf{2010 Mathematics Subject Classification:}~94B05,  11T71.
\end{abstract}

\section{Introduction}

In 2011, Cassuto and Blaum \cite{CB} introduced a new metric framework, named the symbol-pair distance, to protect against pair errors in symbol-pair read channels, where the outputs are overlapping pairs of symbols. In \cite{CJK}, Chee {\it et al.} established a Singleton-like bound for symbol-pair codes and constructed some MDS symbol-pair codes which are linear codes meeting this Singleton-like bound.
Several MDS symbol-pair codes have been constructed (see, for example,  \cite{CJK}, \cite{CLL}, \cite{D}, \cite{KZL}, \cite{KZZLC}, \cite{LG}, \cite{LEL}, \cite{ML1} and \cite{ML2}).
In \cite{CL}, some new bounds on the code size of symbol-pair codes (not necessarily linear) over $\mathbb{F}_q$,
where $\mathbb{F}_q$ is the finite field of order $q$,
were proved by using the theory of classical codes over $\mathbb{F}_{q^2}$.
 In \cite{DNSS} and \cite{DWLS},  the symbol-pair distances of repeated-root constacyclic codes of lengths $p^s$ and $2p^s$ were calculated.
Generalized pair weights of linear codes are generalizations of minimum symbol-pair weights, which were introduced by Liu and Pan \cite{LP}. Generalized pair weights also can be used to characterize the ability of a code for protecting information in the symbol-pair read wire-tap channels of type II.

In 2016, Yaakobi, Bruck and Siegel \cite{YBS} generalized the symbol-pair weight to the $b$-symbol weight.
The Singleton-like bound $$d_{b}(C)\leq \min\{n-k+b,n\}$$ for the minimum $b$-symbol weight $d_{b}(C)$ of an $[n,k]_q$-linear code $C$ was proved in \cite{DZ} and \cite{LP2}.
An $[n,k]_q$-linear code $C$ satisfying $$d_{b}(C)= \min\{n-k+b,n\}$$ is called a $b$-symbol MDS code.
In \cite{LP2}, Liu and Pan provided a necessary and sufficient condition for a linear code to be a $b$-symbol MDS code by using a generator matrix or a parity check matrix of this linear code. Liu and Pan also introduced the notion of the generalized $b$-symbol weight of a linear code, which is a generalization of the generalized Hamming weight and the generalized pair weight, and they obtained some basic properties and bounds of the generalized $b$-symbol weight in \cite{LP2}.
In \cite{SO}, Shi {\it et al}. used algebraic curves over finite fields to obtain tight lower and upper bounds on the $b$-symbol weight of arbitrary cyclic codes.
The $b$-symbol weight distribution of some irreducible cyclic codes was determined in \cite{V} and \cite{ZSO}.

In \cite{BN}, the matrix product code $[C_{1},\cdots, C_{M}]\cdot A$ was introduced, where $C_{1},\cdots,C_{M}$ are codes of length $n$ over $\mathbb{F}_{q}$ and $A=(a_{i,j})_{M\times N}$ is an $M\times N$ matrix over $\mathbb{F}_{q}$.
Matrix product codes are generalizations of the Reed-Muller codes and codes obtained by the  $[u+v,u-v]$-construction, $[u+v+w,2u+v,u]$-construction, $[a + x,b + x,a + b + x]$-construction and $[u+v,u-v]$-construction, and etc.
The linear codes obtained by the $[u+v,u-v]$-construction have many good properties, for example, the linear code
$$C=[C_1,C_2]\cdot\left(\begin{array}{cc}
1 &1\\

1& -1
\end{array}\right) =\{[u+v,u-v]\,|\,u\in C_1,\,v\in C_2\}$$
is a cyclic code when $C_1$ is a cyclic code and $C_2$ is a negacyclic code by Theorem 8.1 of \cite{H}.

In \cite{CCDFM}, Cao {\it et al}.  proved that any $\lambda^{p^{k}}$-constacyclic code of length $p^kn$ over  $\mathbb{F}_{p^m}$ is monomially equivalent to a matrix product code of a nested sequence of $p^k\lambda$-constacyclic codes of length $n$ over $\mathbb{F}_{p^m}$, where $\lambda$ is a nonzero element of $\mathbb{F}_{p^m}$.

 A lower bound for the minimum Hamming distance of matrix product codes over finite fields was obtained in \cite{OS}.
Decoding methods for some matrix product codes were also discussed in \cite{HR} and \cite{HL}. Results about matrix product codes over finite commutative rings were obtained in \cite{FLL}, \cite{FLL1}, \cite{LL} and \cite{V}.

Generalized Reed-Muller codes were studied by \cite{DG}, \cite{KLP}, \cite{KLP1} and \cite{KLP2}.
In \cite{BC}, the automorphism group of generalized Reed-Muller codes is given.
Romanov studied relations between single error correcting perfect codes and Reed-Muller codes, and showed that affine Reed-Muller codes of order $(q-1)m-2$ are quasi-perfect codes in \cite{R}.
A description of generalized Hamming weights of generalized Reed-Muller codes was provided in \cite{BC1} and \cite{HP}.

The main goal of this paper is to study the minimum $b$-symbol weight of matrix product codes. Our main contributions are as follows:
\begin{itemize}
  \item We give some tight lower bounds for the minimum $b$-symbol weight of any matrix product code $[C_{1},\cdots, C_{M}]\cdot A$ for $1\leq b\leq n$ in \textbf{Theorem~\ref{t3.1}}. When $A=(a_{i,j})_{M\times N}$ is an upper triangular nonsingular by column matrix, an upper bound for the minimum $b$-symbol weight of the matrix product code $[C_{1},\cdots, C_{M}]\cdot A$ is obtained for $1\leq b\leq n$  in \textbf{Corollary~\ref{t3.2}}.

  \item We determine all the minimum $b$-symbol weights of Reed-Muller codes in \textbf{Theorem~\ref{weight}}. We provide a necessary and sufficient condition for a Reed-Muller code to be a $b$-symbol MDS code in \textbf{Corollary~\ref{4.7}}.
   \item We study the minimum $b$-symbol weight of linear codes obtained by the $[u+v,u-v]$-construction in \textbf{Theorem~\ref{5.1}}. It is used to construct some $[2n, 2n-2]_q$-linear $b$-symbol almost MDS codes, where $n$ can be any large integer, in \textbf{Corollary~\ref{1.9a}} and \textbf{Remark~\ref{5.6}}.
  \item We determine all the minimum $b$-symbol weights of $[n, n-1]_q$-linear codes and $[n, n-2]_q$-linear codes in \textbf{Theorem~\ref{1.10}} and \textbf{Theorem~\ref{1.11}} respectively, and we give some examples to illustrate these results.
\end{itemize}

The rest of the paper is organized as follows: In Section 2, we give some preliminaries and some notations. In Section 3, tight lower and upper bounds for the minimum $b$-symbol weight of matrix product codes are obtained.  In Section 4, we determine all the minimum $b$-symbol weights of Reed-Muller codes. In Section 5, we give lower and upper bounds for the minimum $b$-symbol weight of linear codes obtained by the $[u+v,u-v]$-construction. As an application of the $[u+v,u-v]$-construction, we construct some $[2n, 2n-2]_q$-linear $b$-symbol AMDS codes where $n$ can be any large integer.
In Section 6, we obtain all the minimum $b$-symbol weights of $[n, n-1]_q$-linear codes and $[n, n-2]_q$-linear codes.

\section{Preliminaries}

Throughout this paper, let $\mathbb{F}_q$ be the finite field of order $q$, where $q=p^e$ and $p$ is a prime.
For $n\in \mathbb{N}^+=\{ 1,2,\cdots\}$, let $\mathbb{F}_q^{n}$ be the $n$-dimensional vector space over $\mathbb{F}_q$.
A nonempty subset of $\mathbb{F}_q^{n}$ is called a  {\it code} of length $n$ over $\mathbb{F}_{q}$.
An $\mathbb{F}_q$-subspace $C$ of dimension $k$ of $\mathbb{F}_q^{n}$ is called an {\it $[n,k]_q$-linear code}. The dual code $C^{\perp}$ of $C$ is defined as
$$
C^{\perp}=\{{\bf x}\in \mathbb{F}_q^n \,| \,  {\bf c}\cdot{\bf x} =0, \forall \, {\bf c}\in C\},
$$
where $ ``-\cdot-" $ denotes the  {\it Euclidean inner product}.

For $n,b\in \mathbb{N}^+$, we always assume $ 1\leq b \leq n$ and $\mathbb{Z}_n=\{1,2,\cdots,n\}$ with zero element
$$n\equiv0\mod n.$$

\begin{Definition}\label{b weight}
Let $\mathbf{x}=(x_{1},x_{2},\cdots,x_{n})\in  \mathbb{F}_q^{n}$. The $b$-symbol support of $\mathbf{x}$, denoted by $\chi_{b}(\mathbf{x})$, is $$\chi_{b}(\mathbf{x})=\{ i \in\mathbb{Z}_{n} \,|\, (x_{i},x_{i+1},\cdots,x_{i+b-1})\neq(0,0,\cdots,0)\}.$$
 The $b$-symbol weight of $\mathbf{x}$ is defined as $w_{b}(\mathbf{x})=|\chi_{b}(\mathbf{x})|$.
\end{Definition}

Since we assume $\mathbb{Z}_n=\{1,2,\cdots,n\}$, we can view $\chi_{b}(\mathbf{x})$ as a subset of $\mathbb{Z}_n$ for any $\mathbf{x}\in  \mathbb{F}_q^{n}$.
When $b=1$, we know that $\chi_{1}(\mathbf{x})$ is the Hamming support of $\mathbf{x}$ and $w_{1}(\mathbf{x})$ is the Hamming weight of $\mathbf{x}$.

\begin{Definition}\label{AA}(\cite{YBS})
For any $\mathbf{x},\mathbf{y} \in \mathbb{F}_{q}^{n}$, the  $b$-symbol distance between $\mathbf{x}$ and $\mathbf{y}$ is defined as
$$d_{b}(\mathbf{x},\mathbf{y})=w_{b}(\mathbf{x}-\mathbf{y}).$$
\end{Definition}

Let $C$ be an $[n,k]_q$-linear code, the {\it minimum $b$-symbol distance} of $C$ is defined as
 $$d_{b}(C)=\min_{\mathbf{c}\ne\mathbf{c'} \in C}\,d_{b}(\mathbf{c},\mathbf{c}')=\min_{{\bf 0}\ne \mathbf{c}\in C}\,w_{b}(\mathbf{c}).$$
In Theorem 3.4 of \cite{LP2} and in \cite{DZ}, the authors gave the Singleton-like bound $$d_{b}(C)\leq\min\{n-k+b,\,n\}$$
for the minimum $b$-symbol distance of the linear code $C$.

\begin{Definition}(\cite{LP2})\label{bmds}
An $[n,k]_q$-linear code $C$ with $$d_{b}(C)= \min\{n-k+b,n\}$$ is called a $b$-symbol maximum distance separable ($\,b$-symbol MDS) code. An $[n,k]_q$-linear code $C$ with $$d_{b}(C)= \min\{n-k+b,n\}-1$$ is called a $b$-symbol almost maximum distance separable ($\,b$-symbol AMDS) code.
\end{Definition}

In \cite{LP2}, a result on $b$-symbol MDS codes for different $b$ are given.
\begin{lem}(Theorem 3.11 of \cite{LP2})\label{bmdss}
Let $1\leq b_1\leq b_2\leq n$. If $C$ is a $b_1$-symbol MDS code, then $C$ is a $b_2$-symbol MDS code.
\end{lem}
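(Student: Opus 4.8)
The plan is to reduce the statement to a single monotonicity property of the $b$-symbol weight on individual codewords and then combine it with the Singleton-like upper bound recalled above. First I would record the basic fact that for any nonzero $\mathbf{x}\in\mathbb{F}_q^n$ one has $w_n(\mathbf{x})=n$, since every cyclic window of length $n$ is just a cyclic shift of $\mathbf{x}$ and hence nonzero. The heart of the argument is the claim that for every nonzero $\mathbf{x}$ and every $b$ with $1\leq b\leq n-1$,
$$w_{b+1}(\mathbf{x})\geq\min\{w_b(\mathbf{x})+1,\,n\}.$$

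To prove this claim I would work with the Hamming support $S=\chi_1(\mathbf{x})$ and observe that, cyclically, $\chi_b(\mathbf{x})=\bigcup_{s\in S}\{s-b+1,\ldots,s\}$, so $\chi_b(\mathbf{x})\subseteq\chi_{b+1}(\mathbf{x})$ always holds. It then suffices to show the inclusion is strict whenever $w_b(\mathbf{x})<n$. A position $i$ lies in $\chi_{b+1}(\mathbf{x})\setminus\chi_b(\mathbf{x})$ exactly when $i+b\in S$ while $i,i+1,\ldots,i+b-1\notin S$, so I must exhibit some $s\in S$ preceded cyclically by at least $b$ consecutive positions lying outside $S$. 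Since $w_b(\mathbf{x})<n$ there is a window of $b$ consecutive positions disjoint from $S$; enlarging it to a maximal run $R$ of non-$S$ positions (which then has length $\geq b$) and taking $s$ to be the element of $S$ immediately following $R$ produces the required new position $s-b\in\chi_{b+1}(\mathbf{x})\setminus\chi_b(\mathbf{x})$, giving strict inclusion and hence the claim.

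With the claim in hand I would pass from codewords to the code by noting that for every nonzero $\mathbf{c}\in C$ one has $w_b(\mathbf{c})\geq d_b(C)$, so the claim yields $w_{b+1}(\mathbf{c})\geq\min\{d_b(C)+1,n\}$, and minimizing over $\mathbf{c}$ gives $d_{b+1}(C)\geq\min\{d_b(C)+1,n\}$. Iterating this inequality from $b_1$ up to $b_2$ produces $d_{b_2}(C)\geq\min\{d_{b_1}(C)+(b_2-b_1),\,n\}$. Substituting the MDS hypothesis $d_{b_1}(C)=\min\{n-k+b_1,n\}$ and checking the two cases $b_1\leq k$ and $b_1>k$ separately gives in both cases $d_{b_2}(C)\geq\min\{n-k+b_2,n\}$. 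Combined with the Singleton-like bound $d_{b_2}(C)\leq\min\{n-k+b_2,n\}$, this forces equality, so $C$ is a $b_2$-symbol MDS code.

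The main obstacle I anticipate is the strict-inclusion step inside the claim: the weak monotonicity $\chi_b(\mathbf{x})\subseteq\chi_{b+1}(\mathbf{x})$ is immediate, but showing that the weight genuinely increases by at least one as long as it has not yet saturated at $n$ requires the combinatorial run argument above, and keeping the cyclic boundary bookkeeping correct (the passage from a length-$b$ empty window to a maximal run, and the location of the following element of $S$) is where the real care is needed; everything after that is the routine induction and a comparison with the Singleton-like bound.
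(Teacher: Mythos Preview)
The paper does not prove this lemma at all; it simply cites it as Theorem~3.11 of \cite{LP2}. So there is no in-paper proof to compare against, and your self-contained argument is the right thing to supply.

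Your proof is correct. The strict-inclusion step is sound: since $\mathbf{x}\neq\mathbf{0}$, the support $S=\chi_1(\mathbf{x})$ is nonempty, so a run of $b$ consecutive non-$S$ positions can be extended to a maximal cyclic run $R$ bounded on the right by some $s\in S$, and then $s-b\in\chi_{b+1}(\mathbf{x})\setminus\chi_b(\mathbf{x})$ exactly as you describe. The iteration and the final case split $b_1\le k$ versus $b_1>k$ are routine and correct.

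It is worth noting that the paper's own toolkit gives a slightly shorter route to your key claim. From Lemma~\ref{relationship-1}(a) one reads off directly that
\[
w_{b+1}(\mathbf{x})-w_b(\mathbf{x})=\bigl|\{H\in\mathbb{H}(\chi_1(\mathbf{x})):|H|\geq b\}\bigr|,
\]
and $w_b(\mathbf{x})<n$ is equivalent to the existence of at least one hole of size $\geq b$; this immediately gives $w_{b+1}(\mathbf{x})\geq w_b(\mathbf{x})+1$ whenever $w_b(\mathbf{x})<n$. That is essentially your maximal-run argument rephrased via the hole decomposition, so the two approaches are equivalent rather than genuinely different.
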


When we study the $b$-symbol support and the $b$-symbol weight of codewords, we need the following definition.
\begin{Definition}\label{hole}
For any subset $J$ of $\mathbb{Z}_n=\{1,2,\cdots,n\}$, a hole $H$ of $J$ with size $|H|=h$ is defined as a nonempty set such that $H=\{a+1,a+2,\cdots,a+h\}\subseteq \mathbb{Z}_n\backslash J$ and $a,a+h+1\in J$. We denote the set of all the holes of $J$ by $\mathbb{H}(J)$.
\end{Definition}

If $|\mathbb{H}(J)|\leq 1$, we say $J$ is a {\it successive subset} of $\mathbb{Z}_n$.
 The following lemma gives the relationship between the Hamming weight and the $b$-symbol weight of any $\mathbf{x}\in\mathbb{F}_q^{n}$.

\begin{lem}\label{relationship-1}
\begin{description}
  \item[(a)] Let  $\mathbf{x}\in\mathbb{F}_q^{n}$. Then $$w_{b}(\mathbf{x})=w_{1}(\mathbf{x})+\sum_{H\in \mathbb{H}(\chi_1(\mathbf{x})),\,|H|\leq b-1}|H|+\sum_{H\in \mathbb{H}(\chi_1(\mathbf{x})),\,|H|\ge b}(b-1).$$
  \item[(b)] Let $C$ be an $[n,k]_q$-linear code. If there exists a codeword $\mathbf{c}\in C$ such that $$w_1(\mathbf{c})=d_1(C)$$ and $\chi_1(\mathbf{c})$ is a successive subset of $\mathbb{Z}_n$, then $$d_b(C)=\min\{d_1(C)+b-1,n\}$$ for every $1\leq b\leq n.$
\end{description}
\end{lem}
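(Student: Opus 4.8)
The plan is to reduce both parts to a single combinatorial description of $\chi_b(\mathbf{x})$ in terms of the Hamming support $\chi_1(\mathbf{x})$ and its holes, and then read off the two statements. For part (a) I would first record the basic equivalence: for $\mathbf{x}=(x_1,\dots,x_n)$ and $i\in\mathbb{Z}_n$, one has $i\in\chi_b(\mathbf{x})$ precisely when the cyclic block $\{i,i+1,\dots,i+b-1\}$ meets $\chi_1(\mathbf{x})$, so that $i\notin\chi_b(\mathbf{x})$ exactly when $x_i=x_{i+1}=\cdots=x_{i+b-1}=0$. Assuming $\mathbf{x}\ne\mathbf{0}$ (for the zero vector both sides of (a) equal $0$), the set $\chi_1(\mathbf{x})$ is nonempty, so $\mathbb{Z}_n\setminus\chi_1(\mathbf{x})$ is the disjoint union of its holes. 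Because a block of $b$ consecutive indices avoiding $\chi_1(\mathbf{x})$ cannot cross an index of $\chi_1(\mathbf{x})$, it must lie inside a single hole; a direct count then shows that a hole $H$ of size $h$ contributes exactly $\max\{0,\,h-b+1\}$ indices to $\mathbb{Z}_n\setminus\chi_b(\mathbf{x})$. Summing over all holes gives $w_b(\mathbf{x})=n-\sum_{H,\,|H|\ge b}(|H|-b+1)$, and using $\sum_{H}|H|=n-w_1(\mathbf{x})$ together with the split into $|H|\le b-1$ and $|H|\ge b$, a one-line rearrangement yields the claimed identity. I would also note the equivalent compact form $w_b(\mathbf{x})=w_1(\mathbf{x})+\sum_{H}\min\{|H|,\,b-1\}$, which is what drives part (b).

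For part (b) I would prove the two inequalities separately. For the lower bound $d_b(C)\ge\min\{d_1(C)+b-1,\,n\}$, I would first establish the auxiliary estimate $w_b(\mathbf{x})\ge\min\{w_1(\mathbf{x})+b-1,\,n\}$ for every nonzero $\mathbf{x}$. This follows from the compact form in (a) and the elementary inequality $\sum_{H}\min\{|H|,\,b-1\}\ge\min\{\sum_H|H|,\,b-1\}=\min\{n-w_1(\mathbf{x}),\,b-1\}$, valid whenever $\chi_1(\mathbf{x})$ has at least one hole and trivially when it has none; adding $w_1(\mathbf{x})$ gives the estimate. Since every nonzero codeword satisfies $w_1(\mathbf{x})\ge d_1(C)$ and the map $t\mapsto\min\{t+b-1,\,n\}$ is nondecreasing, the lower bound on $d_b(C)$ follows.

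For the matching upper bound I would use the hypothesized codeword $\mathbf{c}$ with $w_1(\mathbf{c})=d_1(C)$ and $\chi_1(\mathbf{c})$ successive, i.e. having at most one hole. If it has no hole then $\chi_1(\mathbf{c})=\mathbb{Z}_n$, so $d_1(C)=n$ and (a) gives $w_b(\mathbf{c})=n=\min\{d_1(C)+b-1,\,n\}$. Otherwise there is exactly one hole, of size $n-d_1(C)$, and evaluating the formula in (a) on this single hole gives $w_b(\mathbf{c})=d_1(C)+\min\{n-d_1(C),\,b-1\}=\min\{d_1(C)+b-1,\,n\}$. Hence $d_b(C)\le w_b(\mathbf{c})=\min\{d_1(C)+b-1,\,n\}$, and combining with the lower bound yields the asserted equality for every $1\le b\le n$.

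The main obstacle I anticipate is the combinatorial bookkeeping in part (a): carefully justifying that a block of $b$ consecutive zero-coordinates is confined to one hole, correctly handling the cyclic wrap-around and the degenerate cases ($\mathbf{x}=\mathbf{0}$ and full support $\chi_1(\mathbf{x})=\mathbb{Z}_n$, where there are no holes), and pinning down the exact boundary $|H|=b-1$ between the two regimes. Once the identity in (a) is in hand, the two inequalities for part (b) reduce to the single convexity-type estimate $\sum_H\min\{|H|,\,b-1\}\ge\min\{\sum_H|H|,\,b-1\}$, which is the only other nontrivial step.
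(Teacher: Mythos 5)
Your proof is correct and follows exactly the route the paper intends: part (a) is the hole-counting identity that the paper simply cites from Lemma 3.1 of \cite{LP2} without proof, and your argument for part (b) --- the general estimate $w_b(\mathbf{x})\ge\min\{w_1(\mathbf{x})+b-1,\,n\}$ derived from (a), combined with the exact evaluation of $w_b(\mathbf{c})$ on the codeword whose support has at most one hole --- is precisely what the paper dismisses as ``easy to be proved using the definition of a successive subset and (a)''. The only difference is that you actually supply the details (the confinement of a length-$b$ zero block to a single hole, the count $\max\{0,|H|-b+1\}$, and the degenerate cases), which the paper delegates to a citation and a one-line remark.
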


\begin{proof}
\textbf{(a)} It was proved in Lemma 3.1 of \cite{LP2}.

\textbf{(b)} It is easy to be proved using the definition of a successive subset and \textbf{(a)}.
\end{proof}

In the following, we introduce the definition of the matrix product code $[C_{1},\cdots, C_{M}]\cdot A$, where $A=(a_{i,j})_{M\times N}$ is an $M\times N$ matrix of rank $M$ over $\mathbb{F}_{q}$, and $C_{1},\cdots,C_{M}$ are codes of length $n$ over $\mathbb{F}_{q}$. We assume $M\leq N$ in this paper.

\begin{Definition}\label{MP}
The matrix product code $[C_{1},\cdots, C_{M}]\cdot A$ is the set of all the $1\times nN$ row vectors $\big{[}{\bf c}_{1}, \cdots, {\bf c}_{M}\big{]}\cdot A = \big{[}\sum_{\ell=1}^{M}{\bf c}_{\ell}a_{\ell, 1},\cdots, \sum_{\ell=1}^{M}{\bf c}_{\ell}a_{\ell ,N}\big{]}$, where ${\bf c}_{\ell}\in C_{\ell}$ is a $1 \times n$ row vector for $1\leq \ell\leq M$.
\end{Definition}

In fact, assuming $\mathbf{c}_\ell=(c_{1,\ell},c_{2,\ell},\cdots,c_{n,\ell})\in C_{\ell}$ for $1\leq \ell\leq M$, we have
$$\mathbf{c}=\big{[}{\bf c}_{1}, \cdots, {\bf c}_{M}\big{]}\cdot A =(b_{1,1},\cdots, b_{n,1},b_{1,2},\cdots, b_{n,2},\cdots, b_{1,N},\cdots ,b_{n,N}),$$
where $b_{i,j}=\sum_{\ell=1}^{M} c_{i,\ell}a_{\ell,j}$. Hence there exists an injective map $\Delta$ from the matrix product code $C=[C_{1},\cdots, C_{M}]\cdot A$ to the set of $n\times N$ matrices over $\mathbb{F}_{q}$ such that
$$\Delta(\mathbf{c})=\left(\begin{array}{ccc}
b_{1,1}  &\cdots  &b_{1,N} \\
\vdots  & \ddots &\vdots \\
b_{n,1} & \cdots &b_{n,N}
\end{array}\right)$$
for any $\mathbf{c}=(b_{1,1},\cdots, b_{n,1},b_{1,2},\cdots, b_{n,2},\cdots, b_{1,N},\cdots ,b_{n,N})\in C$.

\begin{Proposition}[\cite{BN} or Proposition 5.5 of \cite{LP1}]\label{ge}
Let $C_{1},\cdots,C_{M}$ be linear codes of length $n$ over $\mathbb{F}_{q}$ and let $A=(a_{i,j})_{M\times N}$ be an $M\times N$ matrix of rank $M$ over $\mathbb{F}_{q}$. Let $C=[C_{1},\cdots, C_{M}]\cdot A$, and let $G_{\ell}$ be a generator matrix of $C_{\ell}$ for all $1\leq \ell\leq M$. Then $$G=\left(\begin{array}{ccc}
a_{1,1}G_{1}  &\cdots  &a_{1,N}G_{1} \\
\vdots  & \ddots &\vdots \\
a_{M,1}G_{M} & \cdots &a_{M,N}G_{M}
\end{array}\right)$$ is a generator matrix of $C$.
\end{Proposition}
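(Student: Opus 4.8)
The plan is to verify the two defining properties of a generator matrix: that the row space of $G$ equals $C$, and that the rows of $G$ are linearly independent. Write $k_\ell=\dim C_\ell$, so that $G_\ell$ is a $k_\ell\times n$ matrix and $G$ is a $\big(\sum_{\ell=1}^M k_\ell\big)\times nN$ matrix partitioned into $M$ block rows, the $\ell$-th of which is $[\,a_{\ell,1}G_\ell,\dots,a_{\ell,N}G_\ell\,]$.

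First I would compute the row space. Take an arbitrary left multiplier $\mathbf{u}=[\mathbf{u}_1,\dots,\mathbf{u}_M]$ with $\mathbf{u}_\ell\in\mathbb{F}_q^{k_\ell}$, partitioned to match the block rows of $G$. Multiplying out block-column by block-column gives
$$\mathbf{u}G=\Big[\sum_{\ell=1}^M a_{\ell,1}\mathbf{u}_\ell G_\ell,\ \dots,\ \sum_{\ell=1}^M a_{\ell,N}\mathbf{u}_\ell G_\ell\Big].$$
Setting $\mathbf{c}_\ell=\mathbf{u}_\ell G_\ell$, each $\mathbf{c}_\ell$ runs over all of $C_\ell$ as $\mathbf{u}_\ell$ runs over $\mathbb{F}_q^{k_\ell}$, and the right-hand side becomes $[\sum_\ell a_{\ell,1}\mathbf{c}_\ell,\dots,\sum_\ell a_{\ell,N}\mathbf{c}_\ell]=[\mathbf{c}_1,\dots,\mathbf{c}_M]\cdot A$, which is exactly a generic codeword of $C$ by Definition~\ref{MP}. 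Hence the row space of $G$ is precisely $C$.

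Next I would establish that $G$ has full row rank, i.e.\ that $\mathbf{u}G=\mathbf{0}$ forces $\mathbf{u}=\mathbf{0}$. From the display above, $\mathbf{u}G=\mathbf{0}$ is equivalent to $\sum_{\ell=1}^M a_{\ell,j}\mathbf{c}_\ell=\mathbf{0}$ for every $j=1,\dots,N$. Stacking the row vectors $\mathbf{c}_1,\dots,\mathbf{c}_M$ as the rows of an $M\times n$ matrix $\mathbf{C}$, these $N$ conditions read $A^{T}\mathbf{C}=\mathbf{0}$. This is the step where the rank hypothesis enters: since $\operatorname{rank}(A)=M$, the $N\times M$ matrix $A^{T}$ has full column rank, so its kernel is trivial and every column of $\mathbf{C}$ must vanish, giving $\mathbf{C}=\mathbf{0}$. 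Thus each $\mathbf{c}_\ell=\mathbf{u}_\ell G_\ell=\mathbf{0}$, and because $G_\ell$ is a generator matrix its rows are linearly independent, whence $\mathbf{u}_\ell=\mathbf{0}$ for all $\ell$ and therefore $\mathbf{u}=\mathbf{0}$.

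Combining the two parts, $G$ spans $C$ with linearly independent rows, so it is a generator matrix of $C$ (and incidentally $\dim C=\sum_{\ell=1}^M k_\ell$). The computation in the first paragraph is entirely routine; the only point demanding care is the argument of the third paragraph, where one must correctly transpose the scalar-vector combinations into the single matrix equation $A^{T}\mathbf{C}=\mathbf{0}$ and then invoke $\operatorname{rank}(A)=M$ --- this is precisely where the assumption that $A$ has rank $M$ is indispensable, since without it $G$ could fail to have independent rows.
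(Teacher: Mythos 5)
Your proof is correct and complete: the block-column computation identifies the row space of $G$ with $C$, and the reduction of $\mathbf{u}G=\mathbf{0}$ to $A^{T}\mathbf{C}=\mathbf{0}$ correctly isolates where $\operatorname{rank}(A)=M$ is needed to get full row rank. The paper itself offers no proof of this proposition --- it is quoted from \cite{BN} and \cite{LP1} --- and your argument is the standard one found there, so there is nothing to reconcile.
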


A lower bound for the minimum Hamming weight of the matrix product code $[C_{1},\cdots, C_{M}]\cdot A$ was given in \cite{BN}, where $A$ is a nonsingular by column matrix defined as follows.

For $1\leq t\leq M$, let $A_t$ be the matrix consisting of the first $t$ rows of $A$. For $1\leq j_1 <\cdots <j_t\leq N$, we use $A_t(j_1,\cdots, j_t)$ to denote the $t\times t$ submatrix consisting of columns $j_1,\cdots, j_t$ of $A_t$.

\begin{Definition}\label{NSC}
A matrix $A$ is called nonsingular by column (NSC), if $A_t(j_1,\cdots, j_t)$ is nonsingular for each $1\leq t\leq M$ and $1\leq j_1 <\cdots <j_t\leq N$.
\end{Definition}

Let $A$ be an NSC matrix. It is easy to verify that the linear code generated by the first $t$ rows of $A$ is a $1$-symbol MDS code for every $1\leq t\leq M$.

\section{Bounds for $b$-symbol weights of matrix product codes}

In this section, we always assume $1\le b\le n$. We will give a lower bound for the minimum $b$-symbol distance of matrix product codes in this section. The following lemma gives a characterization of the $b$-symbol weight of a vector related to a matrix.
\begin{lem}\label{lem31}
Let $J$ be a subset of $\mathbb{Z}_n$ with $|J|=g>0$.
Let $$\mathbf{c}=(b_{1,1},\cdots, b_{n,1},b_{1,2},\cdots, b_{n,2},\cdots, b_{1,N},\cdots ,b_{n,N})\in \mathbb{F}_q^{nN}$$
 and $$\Delta(\mathbf{c})=\left(\begin{array}{ccc}
b_{1,1}  &\cdots  &b_{1,N} \\
\vdots  & \ddots &\vdots \\
b_{n,1} & \cdots &b_{n,N}
\end{array}\right).$$
If the $i$th row of the matrix $\Delta(\mathbf{c})$ has exactly $\ell$ nonzero elements for each $i\in J$ and the other rows of $\Delta(\mathbf{c})$ are zeros, then $$w_b(\mathbf{c})\geq\ell(g+\sum_{H\in \mathbb{H}(J),\,|H|\leq b-1}|H|+\sum_{H\in \mathbb{H}(J),\,|H|\ge b}(b-1)).$$
\end{lem}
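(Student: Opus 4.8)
The plan is to reduce the statement to the identity relating $b$-symbol weight and Hamming weight in Lemma~\ref{relationship-1}(a), and then to compare the holes of $\chi_1(\mathbf{c})$ in $\mathbb{Z}_{nN}$ with the holes of $J$ in $\mathbb{Z}_n$. First I would record that, since each row of $\Delta(\mathbf{c})$ indexed by $J$ has exactly $\ell$ nonzero entries and all other rows vanish, the Hamming weight is $w_1(\mathbf{c})=\ell g$. Writing $f(h)=\min\{h,b-1\}$ so that the two hole-contributions in Lemma~\ref{relationship-1}(a) combine as $\sum_{H}f(|H|)$, that lemma gives $w_b(\mathbf{c})=\ell g+\sum_{H\in\mathbb{H}(\chi_1(\mathbf{c}))}f(|H|)$. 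Hence it suffices to prove $\sum_{H\in\mathbb{H}(\chi_1(\mathbf{c}))}f(|H|)\ge \ell\sum_{H\in\mathbb{H}(J)}f(|H|)$.

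The main device is a per-nonzero-coordinate rewriting of these hole-sums. For a vector $\mathbf{v}$ and a nonzero coordinate $q$ of $\mathbf{v}$, let $z(q)$ denote the number of consecutive zero coordinates immediately preceding $q$ (cyclically). Since every hole is a maximal run of zeros followed by exactly one nonzero coordinate, and distinct holes have distinct successors, one gets the identity $\sum_{H\in\mathbb{H}(\chi_1(\mathbf{v}))}f(|H|)=\sum_{q}f(z(q))$, the sum ranging over the nonzero coordinates $q$. Applying this to $\mathbf{c}$ (in $\mathbb{Z}_{nN}$) and to a length-$n$ vector with support $J$ (in $\mathbb{Z}_n$), the target inequality becomes $\sum_q f(z_{\mathbf{c}}(q))\ge \ell\sum_{r\in J}f(z_J(r))$, where $z_J(r)$ counts the consecutive rows outside $J$ immediately before row $r$, and where each hole of $J$ is the predecessor block of a unique $r\in J$.

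The crux is the pointwise estimate $z_{\mathbf{c}}(q)\ge z_J(r)$ for every nonzero coordinate $q=(j-1)n+r$ of $\mathbf{c}$ lying in row $r\in J$. Here I would exploit that $\mathbf{c}$ is read column by column, each column having height $n$, so that moving backward from the top of a column (row $1$) lands on the bottom of the previous column (row $n$), which is precisely the cyclic adjacency $1-1\equiv n \bmod n$ in $\mathbb{Z}_n$. Consequently, tracing the $m:=z_J(r)$ coordinates $q-1,\dots,q-m$ backward in $\mathbf{c}$ visits exactly rows $r-1,r-2,\dots,r-m$ of $\Delta(\mathbf{c})$ read cyclically in $\mathbb{Z}_n$; since each of these rows lies outside $J$, it is identically zero, so all of $q-1,\dots,q-m$ are zero coordinates and $z_{\mathbf{c}}(q)\ge m$. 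The bound $z_J(r)\le n-1$, valid because $g\ge 1$, guarantees that this backward run never wraps past a full column, keeping the row-correspondence well defined even across the column boundary (and, for $j=1$, across the cyclic wrap to column $N$).

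Summing $f(z_{\mathbf{c}}(q))\ge f(z_J(r))$ over the $\ell$ nonzero coordinates in each row $r\in J$, and then over $r\in J$, yields $\sum_q f(z_{\mathbf{c}}(q))\ge \ell\sum_{r\in J}f(z_J(r))=\ell\sum_{H\in\mathbb{H}(J)}f(|H|)$; combined with $w_b(\mathbf{c})=\ell g+\sum_H f(|H|)$ this gives the claimed bound. I expect the genuine obstacle to be the bookkeeping of the third step, namely verifying rigorously that the cyclic order on $\mathbb{Z}_{nN}$ restricts to the cyclic order of rows in $\mathbb{Z}_n$ as one crosses column boundaries; the constraint $g\ge 1$ is exactly what keeps that correspondence clean, while the reformulation into the $z(\cdot)$-sums is what sidesteps any worry about whether the backward windows of different nonzero coordinates overlap.
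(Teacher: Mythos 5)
Your proof is correct and follows essentially the same route as the paper: both reduce to Lemma~\ref{relationship-1}(a) and then show that each hole of $J$ gives rise to $\ell$ holes of $\chi_1(\mathbf{c})$ of at least the same size, one per nonzero entry of an adjacent row of $\Delta(\mathbf{c})$. Your per-coordinate function $z(\cdot)$ (counting zeros preceding each nonzero coordinate) is just a tidier bookkeeping device for the same injection that the paper builds explicitly by exhibiting the holes $\tilde H_1,\dots,\tilde H_\ell$ following each nonzero entry $b_{a,j_i}$.
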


\begin{proof}
Let $\chi_1(\mathbf{c})\subseteq \mathbb{Z}_{nN}$ be the Hamming support of $\mathbf{c}$.
We claim that every hole $H\in\mathbb{H}(J)$ will  yield $\ell$ different holes $\tilde{H}_{1},\,\tilde{H}_{2},\,\cdots,\,\tilde{H}_{\ell}\in \mathbb{H}(\chi_1(\mathbf{c}))$ such that $|H|\leq |\tilde{H}_{i}|$ for $1\leq i\leq \ell$, where $J\subseteq\mathbb{Z}_n$ and $\chi_1(\mathbf{c})\subseteq \mathbb{Z}_{nN}$.

Let $H=\{a+1,a+2,\cdots,a+h\}\subseteq \mathbb{Z}_n\backslash J$ be a hole of $J$ with $|H|=h$.
Since $a,\,a+h+1\in J$, there exist exactly $\ell$ nonzero elements $b_{a,j_{1}},\,b_{a,j_{2}},\,\cdots,\,b_{a,j_{\ell}}$ for $1\leq j_{1}<j_{2}<\cdots<j_{\ell}\leq N$.
Then the  submatrix
$$\left(\begin{array}{ccc}
b_{a+1,j_{1}}  &\cdots  &b_{a+1,j_{\ell}} \\
\vdots  & \ddots &\vdots \\
b_{a+h,j_{1}} & \cdots &b_{a+h,j_{\ell}}
\end{array}\right)$$
 of $\Delta(\mathbf{c})$ is the zero matrix.
Note that  $$a+n(j_i-1)\in \chi_1(\mathbf{c})$$ for $1\leq i\leq \ell$. There exist $\ell$ different holes $$\tilde{H}_{1},\,\tilde{H}_{2},\,\cdots,\,\tilde{H}_{\ell}\in \mathbb{H}(\chi_1(\mathbf{c}))$$ such that $$\{a+1+n(j_{i}-1),\,a+2+n(j_{i}-1),\,\cdots,\,a+h+n(j_{i}-1)\}\subseteq \tilde{H}_{i}$$ and $|H|\leq |\tilde{H}_{i}|$ for $1\leq i\leq \ell$.

From the discussion above, we also know that two different holes in $\mathbb{H}(J)$ will yield $2\ell$ different holes in $\mathbb{H}(\chi_1(\mathbf{c}))$.
By Lemma \ref{relationship-1}, we have $$w_b(\mathbf{c})=w_{1}(\mathbf{c})+\sum_{H\in \mathbb{H}(\chi_1(\mathbf{c})),\,|H|\leq b-1}|H|+\sum_{H\in \mathbb{H}(\chi_1(\mathbf{c})),\,|H|\ge b}(b-1)$$
$$\geq\ell g+\ell\sum_{H\in \mathbb{H}(J),\,|H|\leq b-1}|H|+\ell\sum_{H\in \mathbb{H}(J),\,|H|\ge b}(b-1).$$
\end{proof}

In the following theorem, we give a lower bound for the minimum $b$-symbol distance of matrix product codes.

\begin{Theorem}\label{t3.1}
Let $C_{1},\cdots,C_{M}$ be codes of length $n$ over $\mathbb{F}_{q}$ and let $A$ be an $M\times N$ matrix of rank $M$. Let $C=[C_{1},\cdots, C_{M}]\cdot A$. Then the minimum $b$-symbol distance $d_b(C)$ of $C$ satisfies the following:

\begin{description}
  \item[(a)] Assume $t_i$ is the minimum Hamming distance of a linear code generated by the first $i$ rows of $A$.  Then
      \begin{equation}\label{3.1}
d_b(C)\geq\min\{t_id_b(C_i)\,|\,i=1,2,\cdots,M\}.
\end{equation}

  \item[(b)] Assume $s_i$ is the minimum Hamming distance of a linear code generated by the last $i$ rows of $A$. Then
      \begin{equation}\label{3.2}
d_b(C)\geq\min\{s_{M-i+1}d_b(C_i)\,|\,i=1,2,\cdots,M\}.
\end{equation}
  \item[(c)] In particular, if $A$ is an NSC matrix, then
\begin{equation}\label{3.3}
d_b(C)\geq\min\{(N-i+1)d_b(C_i)\,|\,i=1,2,\cdots,M\}.
\end{equation}
\end{description}
\end{Theorem}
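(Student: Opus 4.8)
The plan is to derive all three inequalities from a single lower-bound estimate on the $b$-symbol weight of an arbitrary nonzero codeword $\mathbf{c}=[\mathbf{c}_1,\dots,\mathbf{c}_M]\cdot A\in C$, where the choice of a distinguished block-index differs between (a) and (b), and where (c) is a direct specialization of (a). The starting observation is structural: writing $\mathbf{r}_k=(c_{k,1},\dots,c_{k,M})$, the $k$-th row of $\Delta(\mathbf{c})$ is exactly $\mathbf{r}_kA$. Hence each row of $\Delta(\mathbf{c})$ is a codeword of the linear code generated by (a subset of) the rows of $A$, and because $A$ has rank $M$ the $k$-th row vanishes if and only if $\mathbf{r}_k=\mathbf{0}$.

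For part (a) I would set $i_0=\max\{\ell:\mathbf{c}_\ell\neq\mathbf{0}\}$. Since $c_{k,\ell}=0$ for every $\ell>i_0$, each row of $\Delta(\mathbf{c})$ lies in the code $D_{i_0}$ generated by the first $i_0$ rows of $A$, whose minimum Hamming distance is $t_{i_0}$. Moreover, for each $k\in J:=\chi_1(\mathbf{c}_{i_0})$ the coefficient $c_{k,i_0}$ is nonzero, so the $k$-th row is a \emph{nonzero} codeword of $D_{i_0}$ and therefore carries at least $t_{i_0}$ nonzero entries. This is the mechanism that turns the Hamming distance $t_{i_0}$ of the top block of $A$ into a per-row count, while $J$ simultaneously records the Hamming support of the top codeword $\mathbf{c}_{i_0}\in C_{i_0}$.

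To combine the two pieces I would pass to an auxiliary vector $\mathbf{c}''$ obtained from $\mathbf{c}$ by zeroing every row outside $J$ and, inside each row indexed by $J$, retaining exactly $t_{i_0}$ of its (at least $t_{i_0}$) nonzero entries. Zeroing coordinates can only shrink the support of each length-$b$ window, so $\chi_b(\mathbf{c}'')\subseteq\chi_b(\mathbf{c})$ and hence $w_b(\mathbf{c})\geq w_b(\mathbf{c}'')$. By construction the nonzero rows of $\mathbf{c}''$ are indexed exactly by $J$ and each carries exactly $\ell=t_{i_0}$ nonzero entries, which is precisely the hypothesis of Lemma~\ref{lem31}. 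Applying that lemma with $g=|J|$ yields
$$w_b(\mathbf{c})\geq w_b(\mathbf{c}'')\geq t_{i_0}\Big(|J|+\sum_{H\in\mathbb{H}(J),\,|H|\leq b-1}|H|+\sum_{H\in\mathbb{H}(J),\,|H|\geq b}(b-1)\Big),$$
and by Lemma~\ref{relationship-1}(a) the bracketed quantity equals $w_b(\mathbf{c}_{i_0})$, since $\chi_1(\mathbf{c}_{i_0})=J$ and $|J|=w_1(\mathbf{c}_{i_0})$. Thus $w_b(\mathbf{c})\geq t_{i_0}\,w_b(\mathbf{c}_{i_0})\geq t_{i_0}\,d_b(C_{i_0})\geq\min_i t_i\,d_b(C_i)$, and minimizing over nonzero $\mathbf{c}$ gives (a).

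Part (b) I would obtain by the mirror-image choice $i_0=\min\{\ell:\mathbf{c}_\ell\neq\mathbf{0}\}$: then every row of $\Delta(\mathbf{c})$ lies in the span of the last $M-i_0+1$ rows of $A$ (minimum distance $s_{M-i_0+1}$), the rows indexed by $\chi_1(\mathbf{c}_{i_0})$ are again nonzero, and the identical reduction plus Lemma~\ref{lem31} give $w_b(\mathbf{c})\geq s_{M-i_0+1}\,d_b(C_{i_0})$. Part (c) is then immediate from (a), because when $A$ is NSC the code generated by its first $i$ rows is an $[N,i]$ MDS code and so $t_i=N-i+1$. The one genuinely delicate point is the reduction step: since $w_b$ is not additive across the column-blocks of $\Delta(\mathbf{c})$ (windows straddle block boundaries and holes interact), one cannot simply sum row weights as in the Hamming case. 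The monotonicity of $w_b$ under coordinate-zeroing, combined with the uniform-$\ell$ normalization that makes Lemma~\ref{lem31} applicable, is exactly what lets me bypass the non-uniform row weights; the step I would check most carefully is that this normalization leaves the nonzero-row set $J$ (and hence its hole structure $\mathbb{H}(J)$) unchanged.
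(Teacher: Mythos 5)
Your proposal is correct and follows essentially the same route as the paper: choosing the extremal nonzero block index ($\max$ for (a), $\min$ for (b)), using the rank-$M$ hypothesis to identify the nonzero rows of $\Delta(\mathbf{c})$, reducing by zeroing coordinates to the uniform-$\ell$ situation of Lemma~\ref{lem31}, and recovering $w_b(\mathbf{c}_{i_0})$ via Lemma~\ref{relationship-1}(a). The monotonicity of $w_b$ under coordinate-zeroing that you flag as the delicate step is exactly the justification the paper also invokes (and it holds, since zeroing entries can only shrink each length-$b$ window's support), so no gap remains.
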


\begin{proof}
\textbf{(a)} Assume $\mathbf{c}$ is a codeword in $C$ such that $w_b(\mathbf{c})=d_b(C)$.
Then
$$\mathbf{c}=\big{[}{\bf c}_{1}, \cdots, {\bf c}_{M}\big{]}\cdot A =\big{[}\sum_{\ell=1}^{M}{\bf c}_{\ell}a_{\ell, 1},\cdots, \sum_{\ell=1}^{M}{\bf c}_{\ell}a_{\ell ,N}\big{]},$$
for some  $\mathbf{c}_\ell=(c_{1,\ell},c_{2,\ell},\cdots,c_{n,\ell})\in C_{\ell}$ and $1\leq \ell\leq M$.
Let $$\mu=\max\{1\leq \ell\leq M\,|\,{\bf c}_{\ell}\neq 0\}.$$
Then we have $$\mathbf{c}=(b_{1,1},\cdots, b_{n,1},b_{1,2},\cdots, b_{n,2},\cdots, b_{1,N},\cdots ,b_{n,N}),$$
where $b_{i,j}=\sum_{\ell=1}^{\mu} c_{i,\ell}a_{\ell,j}$ and
 $$\Delta(\mathbf{c})=\left(\begin{array}{ccc}
b_{1,1}  &\cdots  &b_{1,N} \\
\vdots  & \ddots &\vdots \\
b_{n,1} & \cdots &b_{n,N}
\end{array}\right)=\left(\begin{array}{ccc}
c_{1,1}  &\cdots  &c_{1,M} \\
\vdots  & \ddots &\vdots \\
c_{n,1} & \cdots &c_{n,M}
\end{array}\right)\left(\begin{array}{ccc}
a_{1,1}  &\cdots  &a_{1,N} \\
\vdots  & \ddots &\vdots \\
a_{M,1} & \cdots &a_{M,N}
\end{array}\right)   .$$
Since $\mathbf{c}_{\ell}=\mathbf{0}$ for any ${\mu}<\ell\leq M$, we know that every row of the matrix $\Delta(\mathbf{c})$ is a linear combination of the first ${\mu}$ rows of the matrix $A$. Since the rank of $A$ is $M$, we have that the $i$th row of $\Delta(\mathbf{c})$ is the zero row if and only if $$(c_{i,1},\,c_{i,2},\,\cdots,\,c_{i,M})=\mathbf{0}.$$
Hence there exist at least $w_1(\mathbf{c}_{\mu})$ nonzero rows of $\Delta(\mathbf{c})$, and the Hamming weight of each nonzero row of $\Delta(\mathbf{c})$ is great than or equal to $t_{\mu}$.

If we change some nonzero elements of $\mathbf{c}$ into zeros, the $b$-symbol weight of $\mathbf{c}$ will decrease. Hence we can assume $\mathbf{c}$ satisfies that the $i$th row of the matrix $\Delta(\mathbf{c})$ has exactly $t_{\mu}$ nonzero elements for each $i\in J$, and other rows of the matrix $\Delta(\mathbf{c})$ are zero rows, where $J=\chi_1(\mathbf{c}_{\mu})$.
By Lemmas \ref{relationship-1} and \ref{lem31}, we get $$w_b(\mathbf{c})\geq t_{\mu}(g+\sum_{H\in \mathbb{H}(J),\,|H|\leq b-1}|H|+\sum_{H\in \mathbb{H}(J),\,|H|\ge b}(b-1))=t_{\mu} w_b(\mathbf{c}_{\mu})\ge t_{\mu} d_b(C_{\mu}),$$
where $g=|J|=|\chi_1(\mathbf{c}_{\mu})|=w_1(\mathbf{c}_{\mu})$.
Hence $$d_b(C)=w_b(\mathbf{c})\ge \min\{t_id_b(C_i)\,|\,i=1,2,\cdots,M\}.$$

\textbf{(b)} Assume $\mathbf{c}$ is a codeword in $C$ such that $w_b(\mathbf{c})=d_b(C)$. Then
$$\mathbf{c}=\big{[}{\bf c}_{1}, \cdots, {\bf c}_{M}\big{]}\cdot A =\big{[}\sum_{\ell=1}^{M}{\bf c}_{\ell}a_{\ell, 1},\cdots, \sum_{\ell=1}^{M}{\bf c}_{\ell}a_{\ell ,N}\big{]},$$
for some  $\mathbf{c}_\ell=(c_{1,\ell},c_{2,\ell},\cdots,c_{n,\ell})\in C_{\ell}$ and $1\leq \ell\leq M$.
Let $$\nu=\min\{1\leq \ell\leq M\,|\,{\bf c}_{\ell}\neq 0\}.$$
Then we have $$\mathbf{c}=(b_{1,1},\cdots, b_{n,1},b_{1,2},\cdots, b_{n,2},\cdots, b_{1,N},\cdots ,b_{n,N}),$$
 where $b_{i,j}=\sum_{\ell=\nu}^{M} c_{i,\ell}a_{\ell,j}$ and
 $$\Delta(\mathbf{c})=\left(\begin{array}{ccc}
b_{1,1}  &\cdots  &b_{1,N} \\
\vdots  & \ddots &\vdots \\
b_{n,1} & \cdots &b_{n,N}
\end{array}\right)=\left(\begin{array}{ccc}
c_{1,1}  &\cdots  &c_{1,M} \\
\vdots  & \ddots &\vdots \\
c_{n,1} & \cdots &c_{n,M}
\end{array}\right)\left(\begin{array}{ccc}
a_{1,1}  &\cdots  &a_{1,N} \\
\vdots  & \ddots &\vdots \\
a_{M,1} & \cdots &a_{M,N}
\end{array}\right)   .$$

Since $\mathbf{c}_{\ell}=\mathbf{0}$ for any $1\leq \ell<\nu$, we know that every row of the matrix $\Delta(\mathbf{c})$ is a linear combination of the last $M-\nu+1$ rows of the matrix $A$. Since the rank of $A$ is $M$, we have that the $i$th row of $\Delta(\mathbf{c})$ is the zero row if and only if $$(c_{i,1},\,c_{i,2},\,\cdots,\,c_{i,M})=\mathbf{0}.$$
Hence there exist at least $w_1(\mathbf{c}_{\nu})$ nonzero rows of $\Delta(\mathbf{c})$, and the Hamming weight of each nonzero row of $\Delta(\mathbf{c})$ is greater than or equal to $s_{M-\nu+1}$.

If we change some nonzero elements of $\mathbf{c}$ into zeros, the $b$-symbol weight of $\mathbf{c}$ will decrease. Hence we can assume $\mathbf{c}$ satisfies that the $i$th row of the matrix $\Delta(\mathbf{c})$ has exactly $s_{M-\nu+1}$ nonzero elements for each $i\in J$, and other rows of the matrix $\Delta(\mathbf{c})$ are zero rows, where $J=\chi_1(\mathbf{c}_{\nu})$.
By Lemmas \ref{relationship-1} and \ref{lem31}, we get $$w_b(\mathbf{c})\geq s_{M-\nu+1}(g+\sum_{H\in \mathbb{H}(J),\,|H|\leq b-1}|H|+\sum_{H\in \mathbb{H}(J),\,|H|\ge b}(b-1))=s_{M-\nu+1} w_b(\mathbf{c}_{\nu})\ge s_{M-\nu+1} d_b(C_{\nu}),$$
where $g=|J|=|\chi_1(\mathbf{c}_{\nu})|=w_1(\mathbf{c}_{\nu})$.
Hence $$d_b(C)=w_b(\mathbf{c})\ge \min\{s_{M-i+1}d_b(C_i)\,|\,i=1,2,\cdots,M\}.$$

\textbf{(c)}  If $A$ is an NSC matrix, we know that the linear code generated by the first $t$ rows of $A$ is a $1$-symbol MDS code for every $1\leq t\leq M$. Hence $$t_i=N-i+1$$ in statement \textbf{(a)}  for $1\leq i\leq M$.
\end{proof}

For convenience, we let $d^*=\min\{(N-i+1)d_b(C_i)\,|\,i=1,2,\cdots,M\}.$
The matrix $ A=(a_{i,j})_{M\times N}$ is called an {\it upper triangular matrix} if $a_{i,j}=0$ for $i>j$.
We have the following corollary.

\begin{Corollary}\label{t3.2}
Assume the notation is as given above.
Let $C=[C_{1},\cdots, C_{M}]\cdot A$ be a matrix product code where $A=(a_{i,j})_{M\times N}$ is an $M\times N$  upper triangular NSC matrix. Then
\begin{description}
  \item[(a)] The minimum $b$-symbol distance $d_b(C)$ of $C$ satisfies
$$d_b(C)\leq\min\{Nd_b(C_1),\,(N-i+1)d_b(C_i)+b-1\,|\,i=2,3,\cdots,M\}.$$
  \item[(b)] If $d^*=Nd_b(C_1)$, then $d_b(C)=d^*$.
  \item[(c)] Let  $2\leq i_0\leq M$ such that $(N-i_{0}+1)d_b(C_{i_0})=d^*.$
Assume there exists a codeword $\mathbf{c}_{i_0}\in C_{i_0}$ and $H\in \mathbb{H}(\chi_1(\mathbf{c}_{i_0}))$ such that $w_b(\mathbf{c}_{i_0})=d_b(C_{i_0})$, $|H|\ge b-1$ and $\{1,n\}\bigcap H\neq \emptyset$.
Then $$d_b(C)=d^*.$$
\end{description}

\end{Corollary}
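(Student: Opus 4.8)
The plan is to derive all three parts from the lower bound $d_b(C)\ge d^{*}$ of Theorem~\ref{t3.1}(c) by exhibiting explicit upper-bounding codewords, so that everything reduces to computing the $b$-symbol weight of one family of test words. For each $i$ I would take a minimal word $\mathbf{c}_i\in C_i$ with $w_b(\mathbf{c}_i)=d_b(C_i)$ and set $\mathbf{c}^{(i)}=[\mathbf{0},\dots,\mathbf{c}_i,\dots,\mathbf{0}]\cdot A\in C$, with $\mathbf{c}_i$ in the $i$th slot and all other slots zero. Before analyzing it I would record a structural fact: since $A$ is NSC, its first $i$ rows generate a $1$-symbol MDS code of dimension $i$, so the $i$th row, being a nonzero codeword, has Hamming weight at least $N-i+1$; but upper-triangularity already forces $i-1$ leading zeros, leaving only $N-i+1$ admissible positions, whence $a_{i,j}\neq 0$ for all $j\ge i$ and $a_{i,j}=0$ for $j<i$.

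With this in hand, every row of $\Delta(\mathbf{c}^{(i)})$ is a scalar multiple of the $i$th row of $A$; the nonzero rows are exactly those indexed by $\chi_1(\mathbf{c}_i)$, and each is supported precisely on columns $i,\dots,N$. Reading off $\chi_1(\mathbf{c}^{(i)})\subseteq\mathbb{Z}_{nN}$, this says that blocks $i,i+1,\dots,N$ each carry a verbatim copy of the pattern $\chi_1(\mathbf{c}_i)$ while blocks $1,\dots,i-1$ are entirely zero. I would then feed this pattern into Lemma~\ref{relationship-1}(a), which turns the problem into counting holes. Writing $g$ for the cyclic wrap-around gap of $\mathbf{c}_i$ in $\mathbb{Z}_n$, the holes of $\chi_1(\mathbf{c}^{(i)})$ come in three kinds: the internal holes of $\chi_1(\mathbf{c}_i)$, repeated once per nonzero block; the $N-i$ inter-block holes between consecutive nonzero blocks, each of size $g$; and one enlarged hole obtained by letting the tail of block $N$ run cyclically through the $n(i-1)$ zero coordinates into the head of block $i$, of size $g+n(i-1)$. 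Since $i\ge 2$ forces $g+n(i-1)\ge n\ge b$, that enlarged hole contributes exactly $b-1$, and comparing with $(N-i+1)$ genuine cyclic copies of $\mathbf{c}_i$ a short computation gives
$$w_b(\mathbf{c}^{(i)}) = (N-i+1)\,d_b(C_i) + (b-1) - \min\{g,\,b-1\}$$
for $i\ge 2$, whereas for $i=1$ there are no zero blocks, the pattern is genuinely periodic, and $w_b(\mathbf{c}^{(1)})=N\,d_b(C_1)$. Dropping $\min\{g,b-1\}\ge 0$ yields exactly the bounds claimed in part~(a).

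Parts~(b) and~(c) are then immediate closings. For~(b), if $d^{*}=N\,d_b(C_1)$ then part~(a) gives $d_b(C)\le N\,d_b(C_1)=d^{*}$, and Theorem~\ref{t3.1}(c) forces equality. For~(c) the idea is to make the correction term $\min\{g,b-1\}$ equal to $b-1$, that is, to arrange $g\ge b-1$. I would feed in the hypothesized word $\mathbf{c}_{i_0}$: its hole $H$ with $\{1,n\}\cap H\neq\emptyset$ is precisely the unique hole straddling the seam between positions $n$ and $1$ of $\mathbb{Z}_n$, hence coincides with the wrap-around gap, so $g=|H|\ge b-1$. The displayed formula at $i=i_0$ then collapses to $w_b(\mathbf{c}^{(i_0)})=(N-i_0+1)\,d_b(C_{i_0})=d^{*}$, and Theorem~\ref{t3.1}(c) again closes the gap to give $d_b(C)=d^{*}$.

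The step I expect to be delicate is the hole bookkeeping in the second paragraph: one must verify that repeating the pattern in $\mathbb{Z}_{nN}$ produces exactly the claimed multiset of hole sizes, and in particular that the leading zero blocks merge with precisely one wrap gap while leaving the internal holes untouched. A closely related subtlety appears in part~(c), where I would check in each sub-case ($H$ containing only $n$, only $1$, or both) that the hypothesis $\{1,n\}\cap H\neq\emptyset$ genuinely identifies $H$ with the wrap-around gap; everything else is routine manipulation of Lemma~\ref{relationship-1}.
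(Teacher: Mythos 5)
Your proposal is correct and follows essentially the same route as the paper: both use the test codewords $\bar{\mathbf{c}}_i=[\mathbf{0},\dots,\mathbf{c}_i,\dots,\mathbf{0}]\cdot A$, the fact that row $i$ of an upper triangular NSC matrix is nonzero exactly in columns $i,\dots,N$, and the lower bound of Theorem~\ref{t3.1}(c) to close parts (b) and (c). Your explicit hole bookkeeping, giving the exact value $w_b(\bar{\mathbf{c}}_i)=(N-i+1)d_b(C_i)+(b-1)-\min\{g,b-1\}$, is a correct refinement of the inequality the paper merely asserts, but it is not a different argument.
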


\begin{proof}
\textbf{(a)} Let $\mathbf{c}_i\in C_i$ such that $w_b(\mathbf{c}_i)=d_b(C_i)$ for $1\leq i \leq M$.
Assume $$\Bar{\mathbf{c}}_i=\big{[}\mathbf{0}, \cdots, \mathbf{0},{\bf c}_{i},\mathbf{0},\cdots, \mathbf{0}\big{]}\cdot A\in C$$ for $1\leq i\leq M$, then $$\Bar{\mathbf{c}}_1=\big{[}a_{1,1}\mathbf{c}_1,a_{1,2}\mathbf{c}_1,\cdots,a_{1,N}\mathbf{c}_1\big{]}$$
and for $2\leq i\leq M$,
$$\Bar{\mathbf{c}}_i=\big{[}\mathbf{0},\cdots,\mathbf{0},a_{i,i}\mathbf{c}_i,a_{i,i+1}\mathbf{c}_i,\cdots,a_{i,N}\mathbf{c}_i, \big{]}.$$
Since a triangular NSC matrix has exactly $(i-1)$ zeros in row $i$ by Proposition 3.5 of \cite{BN}, we have $w_b(\Bar{\mathbf{c}}_1)=N w_b(\mathbf{c}_1)=Nd_b(C_1)$ and for $2\leq i\leq M$,
$$ w_b(\Bar{\mathbf{c}}_i)\leq (N-i+1)w_b(\mathbf{c}_i)+b-1=(N-i+1)d_b(C_i)+b-1.$$
Hence $$d_b(C)\leq\min\{Nd_b(C_1),\,(N-i+1)d_b(C_i)+b-1\,|\,i=2,3,\cdots,M\}.$$

\textbf{(b)} By Theorem~\ref{t3.1} and statement \textbf{(a)}, we have $$Nd_b(C_1)=d^*=d_b(C). $$

\textbf{(c)} Assume there exists a codeword $\mathbf{c}_{i_0}\in C_{i_0}$ and $H\in \mathbb{H}(\chi_1(\mathbf{c}_{i_0}))$ such that $$w_b(\mathbf{c}_{i_0})=d_b(C_{i_0}),$$ $|H|\ge b-1$ and $1\in H$.
Since $$\Bar{\mathbf{c}}_{i_0}=\big{[}\mathbf{0},\cdots,\mathbf{0},a_{i_0,i_0}\mathbf{c}_{i_0},a_{i_0,i_0+1}\mathbf{c}_{i_0},\cdots,a_{i_0,N}\mathbf{c}_{i_0} \big{]},$$ we have $$w_b(\Bar{\mathbf{c}}_{i_0})= (N-i_0+1)w_b(\mathbf{c}_{i_0})=(N-i_0+1)d_b(C_{i_0}).$$
Hence we have statement \textbf{(c)}.
\end{proof}

\begin{Remark}
\begin{description}
  \item[(a)] Theorem 3.7 of \cite{BN} can be easily obtained from Theorem~\ref{t3.1} and Corollary~\ref{t3.2}, when $b=1$.
Furthermore, Theorem 4 of \cite{LEL} is a special case of Theorem~\ref{t3.1} when $b=2$.
  \item[(b)] All the codes $C_{1},\cdots,C_{M}$ in Theorem~\ref{t3.1} and Corollary~\ref{t3.2} are not necessarily linear codes.
\end{description}

\end{Remark}

\section{The $b$-symbol distances of Reed-Muller codes}
In this section,  we determine the minimum $b$-symbol distance of Reed-Muller codes.
Assume $$\mathbb{F}_q = \{\alpha_1=0,\alpha_2,\cdots,\alpha_q \}.$$
In particular, when $q$ is a prime number, we can take $$\alpha_i=i-1$$ for $1\leq i\leq q$.
  We define a partial order ``$\leq $" on $\mathbb{F}_q$ as following:
 $$\alpha_1\leq \alpha_2\leq \cdots\leq \alpha_q,$$
i.e., $\alpha_i<\alpha_j$ means $i<j$.

The partial order ``$\leq $" on $\mathbb{F}_q$ can be easily extended to $\mathbb{F}_q^m$ by using lexicographical order. Let $\mathbf{x}=(x_1,\cdots,x_m),\,\mathbf{y}=(y_1,\cdots,y_m)\in \mathbb{F}_q^m$ be any two vectors. Then $\mathbf{x}<\mathbf{y}$
means there exists $1\leq j \leq m $ such that $x_i=y_i$ for $1\leq i\leq j-1$ and $x_j<y_j$.

For example, let $q=2$ and $m=3$. Assume $\mathbf{x}_i$ is the $i$th column of the following matrix
$$\left(\begin{array}{cccccccc}
0  &0  &0 &0 &1 &1 &1 &1 \\
0  &0  &1 &1 &0 &0 &1 &1 \\
0  &1  &0 &1 &0 &1 &0 &1
\end{array}\right) .$$
Then we have $\mathbf{x}^T_1<\mathbf{x}^T_2<\cdots<\mathbf{x}^T_8.$

Let $\mathbb{F}_q[X_1,\cdots,X_m]$ be the polynomial ring in $m$ variables over $\mathbb{F}_q$ and $$\mathbb{F}_q^{\leq r}[X_1,\cdots,X_m]=\{f\in\mathbb{F}_q[X_1,\cdots,X_m]\,|\, \deg{f}\leq r \}$$ for $m\ge 1$ and $ r\ge 0$.
 The {\it Reed-Muller code} $RM_q(r,m)$ over $\mathbb{F}_q$ is defined as:
$$RM_q(r,m)=\{(f(P_1),f(P_2),\cdots,f(P_{q^m}))\,|\,f\in \mathbb{F}_q^{\leq r}[X_1,\cdots,X_m]\},$$
where $\mathbb{F}_q^m=\{P_1,P_2,\cdots,P_{q^m}\}$ such that $P_1<P_2<\cdots<P_{q^m}$.
By Corollary 1.2 of \cite{DG}, we know $$RM_q(r,m)=\mathbb{F}_q^{q^m},$$ when $  r\ge m(q-1)$.
For convenience, we assume $RM_q(r,m)=\{\mathbf{0}\}$ when $r<0$.

Recall $\mathbb{F}_q = \{\alpha_1,\alpha_2,\cdots,\alpha_q \}$.
For $2\leq i\leq q$ and $1\leq  j\leq q$, we define $$\binom{\alpha_j}{\alpha_i}=\frac{(\alpha_j-\alpha_1)(\alpha_j-\alpha_2)\cdots(\alpha_j-\alpha_{i-1})}{(\alpha_i-\alpha_1)(\alpha_i-\alpha_2)\cdots(\alpha_i-\alpha_{i-1})}.$$
For $i=1$, we define $\binom{\alpha_j}{\alpha_i}=1$.
We know that $\binom{\alpha_j}{\alpha_i}=0$ if and only if $1\leq j\leq i-1$.
Then we can define the matrix $$G_q=\left(\begin{array}{cccc}
\binom{\alpha_1}{\alpha_1}  &\binom{\alpha_2}{\alpha_1}  &\cdots &\binom{\alpha_q}{\alpha_1}\\
\binom{\alpha_1}{\alpha_2} & \binom{\alpha_2}{\alpha_2}  &\cdots &\binom{\alpha_q}{\alpha_2}\\
\vdots & \vdots & \ddots&\vdots\\
\binom{\alpha_1}{\alpha_q} & \binom{\alpha_1}{\alpha_q} &\cdots &\binom{\alpha_q}{\alpha_q}
\end{array}\right) . $$

It is obvious that $G_q$ is upper triangular and has $1$ on its leading diagonal.

\begin{Proposition}\label{4.1aa}[Theorem 5.6 of \cite{BN}]
Assume the notation is as given above. The Reed-Muller code $RM_q(r,m)$ can be iteratively defined by $$RM_q(r,0)=\left\{ \begin{array}{ll}
\{\mathbf{0}\},  & \textrm{if $r<0 ;$}\\
\mathbb{F}_q,  & \textrm{if $r\ge 0 ,$}
\end{array} \right.$$
and for $m\ge 1$, $$RM_q(r,m)=[RM_q(r,m-1),\,RM_q(r-1,m-1),\,\cdots,\,RM_q(r-q+1,m-1)]\cdot G_q.$$
\end{Proposition}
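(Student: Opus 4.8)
The plan is to establish the two inclusions between $RM_q(r,m)$ and the matrix product code $[RM_q(r,m-1),\cdots,RM_q(r-q+1,m-1)]\cdot G_q$ by tracking how a polynomial decomposes according to its first variable, and by matching the lexicographic ordering of $\mathbb{F}_q^m$ with the block structure of the matrix product. The base case $m=0$ is immediate from the definitions, so I focus on the recursion for $m\ge 1$.

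First I would record the basic properties of the auxiliary one-variable polynomials. For $1\leq i\leq q$ set $g_i(X)=\binom{X}{\alpha_i}\in\mathbb{F}_q[X]$, with $g_1=1$, so that $g_i$ has degree exactly $i-1$ and $g_i(\alpha_j)=\binom{\alpha_j}{\alpha_i}$ is precisely the $(i,j)$ entry of $G_q$. Thus the $i$th row of $G_q$ is the evaluation vector of $g_i$ at $\alpha_1,\cdots,\alpha_q$, and since the degrees $0,1,\cdots,q-1$ are distinct, $\{g_1,\cdots,g_q\}$ is an $\mathbb{F}_q$-basis of the space of polynomials of degree $\leq q-1$ in $X$, related to the monomial basis $\{1,X,\cdots,X^{q-1}\}$ by an upper triangular change of basis.

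Next I would set up the block decomposition coming from the lexicographic order. Writing $\mathbb{F}_q^{m-1}=\{Q_1<\cdots<Q_{q^{m-1}}\}$, the ordering $P_1<\cdots<P_{q^m}$ groups the points so that the $j$th block of $q^{m-1}$ consecutive points is exactly $\{(\alpha_j,Q_k)\,|\,1\leq k\leq q^{m-1}\}$, i.e. $P_{(j-1)q^{m-1}+k}=(\alpha_j,Q_k)$. Given $f\in\mathbb{F}_q^{\leq r}[X_1,\cdots,X_m]$, I would first reduce modulo the relations $X_\ell^q=X_\ell$ (which does not increase total degree and does not change the evaluation vector) so that $\deg_{X_1}f\leq q-1$, and then expand $f=\sum_{i=1}^q g_i(X_1)\,h_i(X_2,\cdots,X_m)$. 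The crucial bookkeeping step is to verify that $h_i\in\mathbb{F}_q^{\leq r-i+1}[X_2,\cdots,X_m]$: because $g_i$ has degree exactly $i-1$ and the change of basis from $\{X_1^{i-1}\}$ to $\{g_i(X_1)\}$ is triangular, the coefficient $h_i$ of $g_i(X_1)$ collects only terms of $X_1$-degree $\geq i-1$, each of whose $(X_2,\cdots,X_m)$-part has degree $\leq r-(i-1)$. Setting $\mathbf{c}_i=(h_i(Q_1),\cdots,h_i(Q_{q^{m-1}}))\in RM_q(r-i+1,m-1)$, evaluation gives, for the $j$th block, $f(\alpha_j,Q_k)=\sum_{i=1}^q\binom{\alpha_j}{\alpha_i}h_i(Q_k)$, so that the entire evaluation vector of $f$ equals $[\mathbf{c}_1,\cdots,\mathbf{c}_q]\cdot G_q$. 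This establishes $RM_q(r,m)\subseteq[RM_q(r,m-1),\cdots,RM_q(r-q+1,m-1)]\cdot G_q$.

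Finally, for the reverse inclusion I would run the construction backwards: starting from arbitrary $\mathbf{c}_i\in RM_q(r-i+1,m-1)$ realized by some $h_i\in\mathbb{F}_q^{\leq r-i+1}[X_2,\cdots,X_m]$ (taking $h_i=0$ when $r-i+1<0$, consistent with the convention $RM_q(r-i+1,m-1)=\{\mathbf{0}\}$), the polynomial $f=\sum_{i=1}^q g_i(X_1)h_i$ has total degree at most $\max_i\{(i-1)+(r-i+1)\}=r$, hence lies in $\mathbb{F}_q^{\leq r}[X_1,\cdots,X_m]$, and its evaluation vector is exactly $[\mathbf{c}_1,\cdots,\mathbf{c}_q]\cdot G_q$. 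Combining the two inclusions yields the stated recursion. I expect the main obstacle to be the degree accounting in the third step: one must argue cleanly that passing from the monomial expansion in $X_1$ to the $g_i(X_1)$-expansion preserves the bound $\deg h_i\leq r-i+1$, and that the reduction modulo $X_\ell^q-X_\ell$ does not break the total-degree constraint.
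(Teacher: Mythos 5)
The paper offers no proof of this proposition; it is simply quoted as Theorem 5.6 of \cite{BN}, so there is nothing internal to compare against. Your argument is correct and is essentially the standard derivation behind that cited result: the three load-bearing points --- that the $i$th row of $G_q$ is the evaluation vector of the degree-$(i-1)$ polynomial $g_i(X)=\binom{X}{\alpha_i}$, that the lexicographic order on $\mathbb{F}_q^m$ (first coordinate most significant) makes the $j$th block of $q^{m-1}$ points exactly $\{(\alpha_j,Q_k)\}$ so that the block decomposition of the evaluation vector matches Definition~\ref{MP} applied to $A=G_q$, and that the triangular change of basis from $\{X_1^d\}$ to $\{g_i(X_1)\}$ combined with the reduction $X_1^q\mapsto X_1$ yields $\deg h_i\leq r-i+1$ --- are all handled correctly, including the degenerate case $r-i+1<0$, where $h_i=0$ is forced and matches the convention $RM_q(r-i+1,m-1)=\{\mathbf{0}\}$.
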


The following result gives the minimum Hamming distance of Reed-Muller codes.
\begin{Proposition}\label{4.1}[Theorem 2.6.2 of \cite{DG} and \cite{KLP}]
Assume the notation is as given above. Let $r=t(q-1)+s$, where $0\leq s<q-1$ and $t\ge 0$. Then $$d_1(RM_q(r,m))=(q-s)q^{m-t-1}.$$
\end{Proposition}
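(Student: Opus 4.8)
The plan is to prove the formula by induction on $m$, exploiting the matrix product description of $RM_q(r,m)$ in Proposition \ref{4.1aa} together with the bounds of Section 3 specialized to $b=1$. First I would record the structural fact that $G_q$ is NSC. Indeed, the first $t$ rows of $G_q$ are the evaluation vectors of the polynomials $\binom{X}{\alpha_1},\dots,\binom{X}{\alpha_t}$, which have degrees $0,1,\dots,t-1$ and hence form a basis of the space of univariate polynomials of degree $\le t-1$; evaluation at any $t$ distinct field elements $\alpha_{j_1},\dots,\alpha_{j_t}$ is injective on this space, so each minor $A_t(j_1,\dots,j_t)$ is nonsingular. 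Consequently the code generated by the first $t$ rows of $G_q$ is a $1$-symbol MDS code, and in Theorem \ref{t3.1} one has $t_i=N-i+1=q-i+1$.

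Applying Theorem \ref{t3.1}(c) to the decomposition of Proposition \ref{4.1aa} gives the lower bound $d_1(RM_q(r,m))\ge\min_{1\le i\le q}\{(q-i+1)\,d_1(RM_q(r-i+1,m-1))\}$, while Corollary \ref{t3.2}(a) with $b=1$ (so that the terms $b-1$ vanish) gives the matching upper bound. Hence I obtain the exact recursion
$$d_1(RM_q(r,m))=\min_{1\le i\le q}\{(q-i+1)\,d_1(RM_q(r-i+1,m-1))\},$$
where terms with $r-i+1<0$ correspond to the zero code and may be discarded, and terms with $r-i+1\ge(m-1)(q-1)$ correspond to the full space with distance $1$. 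The base case $m=1$ is the classical statement that a nonzero univariate polynomial of degree $\le r$ (with $0\le r\le q-1$) has at most $r$ roots, so $d_1(RM_q(r,1))=q-r=q-s$, which agrees with the formula since then $t=0$ and $s=r$.

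For the inductive step I would substitute the closed form $d_1(RM_q(r',m-1))=(q-s')q^{(m-1)-t'-1}$, where $r'=t'(q-1)+s'$, into the recursion and show the minimum over $i$ is attained at $i=1$. Writing $j=i-1$ and $r-j=t'(q-1)+s'$, the term at $j=0$ equals $q(q-s)q^{m-t-2}=(q-s)q^{m-t-1}$, which is exactly the target value; the remaining work is to verify that every other term is at least this large. Within the first block $0\le j\le s$ one has $t'=t$ and $s'=s-j$, so the term factors as $(q-j)(q-s+j)q^{m-t-2}=[\,q(q-s)+j(s-j)\,]q^{m-t-2}\ge (q-s)q^{m-t-1}$ because $j(s-j)\ge 0$; once $j$ crosses a block boundary the exponent of $q$ rises by one while the leading factor $(q-i+1)$ shrinks, and a short case analysis (using $s\le q-2$) shows the product never drops below $(q-s)q^{m-t-1}$. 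This minimization across block boundaries is the main obstacle and the only genuinely computational part; everything else is bookkeeping. As an independent check that the bound is achieved, I would exhibit the explicit minimum-weight codeword coming from $\prod_{i=1}^{t}\prod_{\beta\in S_i}(X_i-\beta)\cdot\prod_{\gamma\in T}(X_{t+1}-\gamma)$ with $|S_i|=q-1$ and $|T|=s$, whose degree is $t(q-1)+s=r$ and which is nonzero at exactly $(q-s)q^{m-t-1}$ points.
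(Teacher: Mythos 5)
The paper offers no proof of this proposition; it is quoted verbatim from the classical references (Theorem 2.6.2 of \cite{DG} and \cite{KLP}), so there is nothing internal to compare against. Your argument is a correct, self-contained derivation, and it is pleasantly consistent with the paper's own machinery: at $b=1$ the lower bound of Theorem~\ref{t3.1}(c) and the upper bound of Corollary~\ref{t3.2}(a) coincide (this is exactly how the paper recovers Blackmore--Norton's Theorem 3.7 in Remark 3.3(a)), so the recursion $d_1(RM_q(r,m))=\min_i\,(q-i+1)\,d_1(RM_q(r-i+1,m-1))$ is exact once one knows $G_q$ is upper triangular NSC, which your Vandermonde-type argument establishes. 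I checked the minimization you flag as the main computational step and it does go through: writing $j=i-1$, for $0\le j\le s$ the term is $\bigl(q(q-s)+j(s-j)\bigr)q^{m-t-2}\ge (q-s)q^{m-t-1}$, and for $s<j\le q-1$ one gets $(q-j)(j-s+1)q^{m-t-1}$, and with $u=j-s$, $v=q-s$ the inequality $(v-u)(u+1)\ge v$ reduces to $u\le v-1$, which holds on the whole range with equality at $j=q-1$ (matching Lemma~\ref{dee}). Two small points of bookkeeping: the closed form only makes sense for $0\le r\le m(q-1)$, and in the boundary case $r>(m-1)(q-1)$ the first constituent code $RM_q(r,m-1)$ is all of $\mathbb{F}_q^{q^{m-1}}$, so the $i=1$ term equals $q$ and can strictly exceed the target $(q-s)q^{m-t-1}$; the minimum is then attained at $i=q$ rather than $i=1$. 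This does not affect the conclusion, but the inductive step should claim only that the minimum \emph{equals} the target, not that it is attained at $i=1$. Your closing explicit codeword $\prod_{i\le t}\prod_{\beta\in S_i}(X_i-\beta)\cdot\prod_{\gamma\in T}(X_{t+1}-\gamma)$ is the standard direct proof of the upper bound and is a useful independent check.
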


\begin{lem}\label{dee}
Assume the notation is given above. Suppose $r\ge q-1$, then $$d_1(RM_q(r,m))=d_1(RM_q(r-q+1,m-1)).$$
\end{lem}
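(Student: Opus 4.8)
The plan is to reduce the statement to the explicit minimum Hamming distance formula recorded in Proposition~\ref{4.1}, so that the claimed identity becomes a bookkeeping check on the division-with-remainder representation of the order $r$. First I would write $r=t(q-1)+s$ with $0\le s<q-1$ and $t\ge 0$, which is exactly the normal form required by Proposition~\ref{4.1}. The hypothesis $r\ge q-1$ is precisely what forces $t\ge 1$: if $t=0$ then $r=s<q-1$, contradicting $r\ge q-1$. This is the only place the hypothesis is used, and it is what keeps the shifted parameter inside the admissible range $t-1\ge 0$.

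Next I would compute the normal form of the shifted order $r-q+1$. Since $r-q+1=r-(q-1)=(t-1)(q-1)+s$ with $t-1\ge 0$ and the same remainder $s$ satisfying $0\le s<q-1$, this is precisely the form needed to apply Proposition~\ref{4.1} to $RM_q(r-q+1,m-1)$, with the triple $(t,s,m)$ replaced by $(t-1,s,m-1)$. Feeding both sides into the formula then gives
$$d_1(RM_q(r,m))=(q-s)q^{m-t-1}$$
and
$$d_1(RM_q(r-q+1,m-1))=(q-s)q^{(m-1)-(t-1)-1}=(q-s)q^{m-t-1},$$
so the two quantities coincide.

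The only point that needs care, and hence the main (minor) obstacle, is the boundary behaviour when the codes degenerate to the whole ambient space. The expression $(q-s)q^{m-t-1}$ is a genuine distance only while $m-t-1\ge 0$, i.e.\ while $r\le m(q-1)$; for larger $r$ the code $RM_q(r,m)$ equals $\mathbb{F}_q^{q^m}$ and its minimum distance is $1$. I would therefore treat the range $r\ge m(q-1)$ separately: there $r-q+1\ge (m-1)(q-1)$, so $RM_q(r-q+1,m-1)$ is likewise the full space $\mathbb{F}_q^{q^{m-1}}$, and both minimum distances equal $1$. On the complementary range $q-1\le r<m(q-1)$ one checks that $1\le t\le m-1$, whence the exponent $m-t-1$ is nonnegative and the formula computation above applies verbatim, completing the argument.
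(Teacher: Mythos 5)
Your proposal is correct and follows essentially the same route as the paper: write $r=t(q-1)+s$, note that $r\ge q-1$ forces $t\ge 1$, observe $r-q+1=(t-1)(q-1)+s$, and apply Proposition~\ref{4.1} to both codes to get $(q-s)q^{m-t-1}$ on each side. Your extra discussion of the degenerate range $r\ge m(q-1)$ is a careful touch the paper omits, but it does not change the argument.
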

\begin{proof}
Assume $r=t(q-1)+s$, where $0\leq s<q-1$ and $t\ge 1$, then $$r-q+1=(t-1)(q-1)+s.$$ By Proposition~\ref{4.1}, we have $$d_1(RM_q(r,m))=(q-s)q^{m-t-1}=d_1(RM_q(r-q+1,m-1)).$$
\end{proof}
\begin{lem}\label{succ1}
Assume $\mathbb{F}_q = \{\alpha_1=0,\alpha_2,\cdots,\alpha_q \}$.
Let $\mathbf{x}_i=(\alpha_1^i,\alpha_2^i,\cdots,\alpha_q^i)\in \mathbb{F}_q^q$ for $1\leq i \leq q-1$. For $1\leq r\leq q-1$, then there exist $a_i\in \mathbb{F}_q$ for $1\leq i\leq r$ such that $$\sum_{i=1}^{r}a_i\mathbf{x}_i=(y_1,y_2,\cdots,y_q)\neq\mathbf{0} $$
and $y_j=0$ for $1\leq j\leq r$.

\end{lem}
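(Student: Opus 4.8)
The plan is to recast the statement as an existence question for a small homogeneous linear system and then to rule out the degenerate outcome using the fact that a low-degree polynomial cannot vanish on all of $\mathbb{F}_q$.

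First I would write the coordinates explicitly. The $j$-th entry of $\sum_{i=1}^{r} a_i \mathbf{x}_i$ is $y_j = \sum_{i=1}^{r} a_i \alpha_j^i$. The key observation is that $\alpha_1 = 0$ while every exponent satisfies $i \geq 1$, so $y_1 = \sum_{i=1}^{r} a_i \alpha_1^i = 0$ holds automatically, for any choice of the $a_i$. Hence the requirement ``$y_j = 0$ for $1 \leq j \leq r$'' reduces to the $r-1$ linear conditions $y_2 = \cdots = y_r = 0$ imposed on the $r$ unknowns $a_1, \ldots, a_r$. This reduction from $r$ to $r-1$ genuine constraints is the whole point.

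Next I would invoke the standard fact that a homogeneous linear system with strictly more unknowns than equations has a nontrivial solution: since $r > r-1$, the solution space has dimension at least $1$, so there exist $a_1, \ldots, a_r \in \mathbb{F}_q$, not all zero, with $y_2 = \cdots = y_r = 0$, and therefore $y_1 = \cdots = y_r = 0$. It remains only to check that the resulting full vector $(y_1, \ldots, y_q)$ is nonzero. Suppose for contradiction that all $q$ of its coordinates vanish. Then the polynomial $f(X) = \sum_{i=1}^{r} a_i X^i \in \mathbb{F}_q[X]$, which has degree at most $r \leq q-1$, satisfies $f(\alpha_j) = y_j = 0$ for every $1 \leq j \leq q$, i.e.\ $f$ vanishes on all $q$ elements of $\mathbb{F}_q$. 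A nonzero polynomial of degree at most $q-1$ has at most $q-1$ roots, so $f$ must be the zero polynomial, forcing $a_1 = \cdots = a_r = 0$ and contradicting the nontriviality of our solution. Thus $(y_1, \ldots, y_q) \neq \mathbf{0}$, as required.

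I do not expect a serious obstacle here; the argument is elementary once the right viewpoint is fixed. The only subtle step is recognizing that the coordinate corresponding to $\alpha_1 = 0$ imposes no constraint, so that the system is underdetermined and a nonzero coefficient vector is guaranteed; the nonvanishing of the output vector is then forced purely by the degree bound $r \leq q-1$. The same two ingredients also cover the boundary case $r = 1$, where there are no constraints at all and any nonzero $a_1$ already yields $y_2 = a_1 \alpha_2 \neq 0$.
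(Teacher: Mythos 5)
Your proposal is correct and follows exactly the route the paper intends: the paper's own proof is just the remark ``Note that $\alpha_1=0$, it is easy to prove this lemma using basic linear algebra,'' and your argument (the coordinate at $\alpha_1=0$ imposes no constraint, so the homogeneous system of $r-1$ equations in $r$ unknowns has a nontrivial solution, and the resulting vector is nonzero because a nonzero polynomial of degree at most $q-1$ cannot vanish at all $q$ points) is precisely the linear-algebra argument being alluded to, spelled out in full.
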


\begin{proof}
Note that $\alpha_1=0$, it is easy to prove this lemma using basic linear algebra.

\end{proof}

\begin{lem}\label{succ}
Suppose $m\ge 1$ and $0\leq r<q-1$. Then there exists a codeword $\mathbf{c}\in RM_q(r,m)$ such that $$w_1(\mathbf{c})=d_1(RM_q(r,m))$$
and $\chi_1(\mathbf{c})$ is a successive subset of $\mathbb{Z}_{q^{m}}$ such that $1\notin \chi_1(\mathbf{c})$ unless $\chi_1(\mathbf{c})=  \mathbb{Z}_{q^{m}}$.
\end{lem}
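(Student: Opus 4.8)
The plan is to exhibit a single explicit minimum-weight codeword and read off its support directly, rather than to argue by induction on $m$. Since $0\le r<q-1$, writing $r=t(q-1)+s$ forces $t=0$ and $s=r$, so Proposition~\ref{4.1} gives $d_1(RM_q(r,m))=(q-r)q^{m-1}$. First I would dispose of the degenerate case: if $r=0$, then $RM_q(0,m)$ consists of the constant vectors, so the all-ones word is a minimum-weight codeword with $\chi_1(\mathbf{c})=\mathbb{Z}_{q^m}$, which is exactly the exceptional case allowed by the statement.

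For $1\le r\le q-2$, I would take the polynomial $f(X_1,\dots,X_m)=\prod_{i=1}^{r}(X_1-\alpha_i)$, which has total degree $r$ and hence lies in $\mathbb{F}_q^{\le r}[X_1,\dots,X_m]$, and let $\mathbf{c}$ be its evaluation vector in $RM_q(r,m)$. The key observation is that $f$ depends only on the first coordinate, so $f(P)=0$ precisely when the first coordinate of $P$ lies in $\{\alpha_1,\dots,\alpha_r\}$, and $f(P)\ne 0$ otherwise, since $\prod_{i=1}^r(x_1-\alpha_i)\ne 0$ whenever $x_1\notin\{\alpha_1,\dots,\alpha_r\}$.

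The crux is to translate ``first coordinate in $\{\alpha_1,\dots,\alpha_r\}$'' into a block of consecutive positions. Because the evaluation points $P_1<\cdots<P_{q^m}$ are ordered lexicographically with the first coordinate most significant, the points whose first coordinate equals $\alpha_j$ occupy exactly the positions $(j-1)q^{m-1}+1,\dots,jq^{m-1}$. Consequently the zero set of $\mathbf{c}$ is precisely $\{1,2,\dots,rq^{m-1}\}$ and $\chi_1(\mathbf{c})=\{rq^{m-1}+1,\dots,q^m\}$. This yields $w_1(\mathbf{c})=q^m-rq^{m-1}=(q-r)q^{m-1}=d_1(RM_q(r,m))$; and since $\chi_1(\mathbf{c})$ is a single interval ending at position $q^m\equiv 0$, its complement forms a single hole in the sense of Definition~\ref{hole}, so $\chi_1(\mathbf{c})$ is successive. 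Finally $1\notin\chi_1(\mathbf{c})$ because $r\ge 1$ places position $1$ in the zero block.

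The main obstacle, and the only step requiring real care, is the ordering claim: that lexicographic order groups the points into $q$ consecutive blocks indexed by the value of the first coordinate. I would verify this straight from the definition of ``$<$'' on $\mathbb{F}_q^m$ (comparison by the first differing coordinate), the same phenomenon visible in the displayed $q=2,\,m=3$ example. I would also remark that the construction is consistent with the recursive description of Proposition~\ref{4.1aa}: taking each component codeword to be a scalar multiple of the all-ones vector of length $q^{m-1}$ reduces the matrix product over $G_q$ to the univariate statement, of which Lemma~\ref{succ1} is precisely the $m=1$ instance. Thus no genuine induction on $m$ is needed, which also sidesteps the difficulty that naively concatenating $q$ copies of a lower-dimensional minimum-weight word would produce $q$ separate holes and destroy successiveness.
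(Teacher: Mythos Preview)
Your proof is correct and follows essentially the same route as the paper: both exhibit a degree-$r$ polynomial in $X_1$ alone that vanishes exactly on $\alpha_1,\dots,\alpha_r$ and then use the lexicographic ordering to see that the support is the single block $\{rq^{m-1}+1,\dots,q^m\}$. The only difference is cosmetic: you write down the polynomial $\prod_{i=1}^{r}(X_1-\alpha_i)$ explicitly, whereas the paper obtains an equivalent polynomial $\sum_{i=1}^{r}a_iX_1^i$ by invoking Lemma~\ref{succ1}.
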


\begin{proof}
Assume $\mathbb{F}_q = \{\alpha_1=0,\alpha_2,\cdots,\alpha_q \}$.
When $r=0$, we know that $$RM_q(0,m)=\{\alpha\mathbf{1}\,|\,\alpha\in\mathbb{F}_q\},$$
$w_1(\mathbf{1})=d_1(RM_q(0,m))$  and  $\chi_1(\mathbf{1})=  \mathbb{Z}_{q^{m}},$
 where $\mathbf{1}\in \mathbb{F}_q^{q^m}.$

Assume $r\ge 1$.
Let $$f_i(X_1,\cdots,X_m)=X^i_1\in \mathbb{F}_q[X_1,\cdots,X_m]$$ for $0\leq i\leq  q-1$.
Then $$(f_i(P_1),f_i(P_2),\cdots,f_i(P_{q^m}))=(\alpha_1^i\mathbf{1},\,\alpha_2^i\mathbf{1},\,\cdots ,\,\alpha_q^i\mathbf{1})\in RM_q(r,m),$$
where $\mathbf{1}\in \mathbb{F}_q^{q^{m-1}}.$
By Lemma~\ref{succ1}, there exist $a_i\in \mathbb{F}_q$ for $1\leq i\leq r$ such that
 $$\sum_{i=1}^{r}a_i(f_i(P_1),f_i(P_2),\cdots,f_i(P_{q^m}))=(y_1\mathbf{1},y_2\mathbf{1},\cdots,y_q\mathbf{1})\neq \mathbf{0},$$
where $y_j=0$ for $1\leq j\leq r$ and $\mathbf{0}\in \mathbb{F}_q^{q^m}.$
Let $$f(X_1,\cdots,X_m)=\sum^r_{i=1} a_i X^i_1\in \mathbb{F}_q^{\leq r}[X_1,\cdots,X_m],$$
then we have $$(f(P_1),\cdots,f(P_{q^m}))=\sum_{i=1}^{r}a_i(f_i(P_1),\cdots,f_i(P_{q^m}))=(y_1\mathbf{1},\cdots,y_q\mathbf{1})\in RM_q(r,m).$$
Since $y_j=0$ for $1\leq j\leq r$, we have
 $$d_1(RM_q(r,m))\leq w_1((f(P_1),f(P_2),\cdots,f(P_{q^m})))\leq (q-r)q^{m-1}.$$
 By Proposition~\ref{4.1}, we know that $$w_1((f(P_1),f(P_2),\cdots,f(P_{q^m})))= (q-r)q^{m-1}$$ and $$y_j\neq 0,$$ for $r+1\leq j\leq q-1$.
 Hence $\mathbf{c}=(f(P_1),f(P_2),\cdots,f(P_{q^m}))$ is what we desire.
\end{proof}

\begin{Theorem}\label{weight}
Assume the notation is as given above. Let $m\ge 1$ and $r=t(q-1)+s$, where $0\leq s<q-1$ and $t\ge 0$.
Then $$d_b(RM_q(r,m))=\min\{(q-s)q^{m-t-1}+b-1,\,q^m\}.$$

\end{Theorem}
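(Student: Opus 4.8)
The plan is to reduce the whole statement to Lemma~\ref{relationship-1}(b). Since Proposition~\ref{4.1} already supplies $d_1(RM_q(r,m))=(q-s)q^{m-t-1}$, it suffices to exhibit a single codeword $\mathbf{c}\in RM_q(r,m)$ with $w_1(\mathbf{c})=d_1(RM_q(r,m))$ whose Hamming support $\chi_1(\mathbf{c})$ is a successive subset of $\mathbb{Z}_{q^m}$; then Lemma~\ref{relationship-1}(b) gives $d_b(RM_q(r,m))=\min\{d_1(RM_q(r,m))+b-1,\,q^m\}$, which is exactly the claimed formula (note that this lemma delivers an equality, so both the upper and the lower bound are taken care of at once). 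Thus the entire argument is the construction of one well-placed minimum-weight codeword, and I will arrange the construction so that it additionally preserves the property $1\notin\chi_1(\mathbf{c})$ (except when $\mathbf{c}$ has full support); this extra invariant is what will let the induction close.

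I would build such a codeword by induction that reduces the pair $(r,m)$ to $(r-q+1,m-1)$. There are two base situations. If $r\geq m(q-1)$ then $RM_q(r,m)=\mathbb{F}_q^{q^m}$, and a single nonzero coordinate placed in the last position $q^m$ has weight $1=d_1$ and support $\{q^m\}$, which is trivially successive and avoids position $1$. If instead $t=0$, i.e. $0\leq r<q-1$ (with $m\geq 1$), then the codeword produced in Lemma~\ref{succ} is precisely what is required: it has weight $(q-r)q^{m-1}=d_1$, successive support, and $1\notin\chi_1$ unless the support is full.

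For the inductive step assume $t\geq 1$ (so $r\geq q-1$) and $r<m(q-1)$; then $t<m$, hence $m\geq 2$ and $0\leq r-q+1=(t-1)(q-1)+s<(m-1)(q-1)$. By the induction hypothesis $RM_q(r-q+1,m-1)$ has a minimum-weight codeword, say the evaluation vector $\mathbf{c}'$ of some $g(X_2,\dots,X_m)$ with $\deg g\leq r-q+1$, such that $\chi_1(\mathbf{c}')$ is successive and $1\notin\chi_1(\mathbf{c}')$ (unless $\mathbf{c}'$ has full support). I would then lift $g$ to
$$f(X_1,\dots,X_m)=\Big(\prod_{j=1}^{q-1}(X_1-\alpha_j)\Big)\,g(X_2,\dots,X_m),$$
whose degree is at most $(q-1)+(r-q+1)=r$, so its evaluation $\mathbf{c}$ lies in $RM_q(r,m)$. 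Because $\prod_{j=1}^{q-1}(X_1-\alpha_j)$ vanishes at $\alpha_1,\dots,\alpha_{q-1}$ and is nonzero at $\alpha_q$, and because points with a common first coordinate occupy consecutive blocks of length $q^{m-1}$ under the lexicographic order, the vector $\mathbf{c}$ is zero on the first $q-1$ blocks and equals a nonzero scalar multiple of $\mathbf{c}'$ on the last block. Hence $w_1(\mathbf{c})=w_1(\mathbf{c}')=d_1(RM_q(r-q+1,m-1))=d_1(RM_q(r,m))$ by Lemma~\ref{dee}, so $\mathbf{c}$ is again a minimum-weight codeword, and $1\notin\chi_1(\mathbf{c})$ since block $1$ is zero.

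The step I expect to require the most care is checking that $\chi_1(\mathbf{c})$ is successive in $\mathbb{Z}_{q^m}$, and this is where the invariant $1\notin\chi_1(\mathbf{c}')$ is essential. Combined with successiveness, that invariant forces $\chi_1(\mathbf{c}')$ to be a single contiguous run inside $\{1,\dots,q^{m-1}\}$ that does not wrap cyclically through position $1$; a successive set that were allowed to wrap (containing both $1$ and $q^{m-1}$) would split into two separated runs once surrounded by zero blocks and would thereby create two holes. Placing such a single run into the last block and surrounding it by the zero blocks leaves, cyclically, exactly one hole of $\chi_1(\mathbf{c})$, so $\mathbf{c}$ is successive and the invariant is preserved. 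The only remaining bookkeeping is the degenerate case where $\mathbf{c}'$ has full support, which occurs exactly when $t=1$ and $s=0$; there the last block of $\mathbf{c}$ is entirely nonzero, still a single run ending at $q^m$, so $\chi_1(\mathbf{c})$ again has a single hole. With this construction in hand, applying Lemma~\ref{relationship-1}(b) yields the theorem.
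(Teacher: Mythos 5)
Your proposal is correct and follows essentially the same route as the paper: both arguments reduce the theorem to Lemma~\ref{relationship-1}(b) by constructing, via induction that sends $(r,m)$ to $(r-q+1,m-1)$ and uses Lemma~\ref{succ} and Lemma~\ref{dee}, a minimum-weight codeword supported on a single contiguous run in the last block of length $q^{m-1}$ with the invariant $1\notin\chi_1(\mathbf{c})$ unless the support is full. The only difference is that the paper realizes the lift $[\mathbf{0},\dots,\mathbf{0},\mathbf{c}]\in RM_q(r,m)$ through the matrix-product description of Proposition~\ref{4.1aa}, whereas you obtain the same codeword by multiplying the polynomial by $\prod_{j=1}^{q-1}(X_1-\alpha_j)$; your explicit treatment of the cyclic wrap-around issue for successive sets is a welcome clarification of a point the paper leaves implicit.
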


\begin{proof}
First we claim that there exists a codeword $\mathbf{c}\in RM_q(r,m)$ such that $$w_1(\mathbf{c})=d_1(RM_q(r,m))$$ and $\chi_1(\mathbf{c})$ is a successive subset of $\mathbb{Z}_{q^m}$ such that $1\notin \chi_1(\mathbf{c})$ unless $\chi_1(\mathbf{c})=\mathbb{Z}_{q^m}$.
We prove this by induction on $m$.

Assume $m=1$. When $r\ge m(q-1)=q-1$, we have $RM_q(r,m)=\mathbb{F}_q^{q^m}.$
When $0\leq r< q-1$, we can get what we need by Lemma~\ref{succ}.

Now assume $m\ge 2$. By Lemma~\ref{succ}, we can assume that $r\ge q-1$.
By Proposition \ref{4.1aa}, we know that
$$RM_q(r,m)=[RM_q(r,m-1),\,RM_q(r-1,m-1),\,\cdots,\,RM_q(r-q+1,m-1)]\cdot \left(\begin{array}{cccc}
1  &a_{1,2}&\cdots&a_{1,q}   \\
0 &1&\cdots&a_{2,q}   \\
\vdots  &\vdots&\ddots&\vdots   \\
0  & 0&0&1
\end{array}\right) .$$

Let $G(r-i,m-1)$ be a generator matrix of $RM_q(r-i,m-1)$ for $0\leq i \leq q-1$.  By Proposition~\ref{ge},  $$G(r,m)=\left(\begin{array}{cccc}
G(r,m-1)  &a_{1,2}G(r,m-1)&\cdots&a_{1,q}G(r,m-1)   \\
0 &G(r-1,m-1)&\cdots&a_{2,q}G(r-1,m-1)   \\
\vdots  &\vdots&\ddots&\vdots   \\
0  & 0&0&G(r-q+1,m-1)
\end{array}\right) $$
is a generator matrix of $RM_q(r,m)$.
By Lemma~\ref{dee}, $$d_1(RM_q(r-q+1,m-1))=d_1(RM_q(r,m)).$$
By induction, there exists $\mathbf{c}\in RM_q(r-q+1,m-1)$ such that $$w_1(\mathbf{c})=d_1(RM_q(r-q+1,m-1))=d_1(RM_q(r,m))$$
and $\chi_1(\mathbf{c})$ is a successive subset of $\mathbb{Z}_{q^{m-1}}$ such that $1\notin \chi_1(\mathbf{c})$ unless $\chi_1(\mathbf{c})=  \mathbb{Z}_{q^{m-1}}$.
Note that $G(r,m)$ is the generator matrix of $RM_q(r,m)$, we have  $$\tilde{\mathbf{c}}=[\mathbf{0},\,,\cdots,\,\mathbf{0},\,\,\mathbf{c}]\in RM_q(r,m),$$
then $$w_1(\tilde{\mathbf{c}})=d_1(RM_q(r-q+1,m-1))=d_1(RM_q(r,m))$$ and $\chi_1(\tilde{\mathbf{c}})$ is a successive subset of $\mathbb{Z}_{q^{m}}$ such that $1\notin \chi_1(\tilde{\mathbf{c}})$ unless $\chi_1(\tilde{\mathbf{c}})=  \mathbb{Z}_{q^{m}}$.

By Lemma~\ref{relationship-1}, we have $$d_b(RM_q(r,m))=\min\{d_1(RM_q(r,m))+b-1,\,q^m\}=\min\{(q-s)q^{m-t-1}+b-1,\,q^m\}.$$
\end{proof}

By Lemma~\ref{bmdss}, we know that if an $[n,k]_q$-linear code $C$ is a $1$-symbol MDS code, then $C$ is a $b$-symbol MDS code for any $1\leq b \leq n.$

\begin{Corollary}\label{4.7}
Assume the notation is as given above.
For $m\ge 1$ and $ r\ge 0$, then $RM_q(r,m)$ is a $b$-symbol MDS code if and only if either of the following holds:
\begin{description}
  \item[(a)] $b\ge  q^m-(q-s)q^{m-t-1}+1;$
  \item[(b)] $RM_q(r,m)$ is a $1$-symbol MDS code.
\end{description}
\end{Corollary}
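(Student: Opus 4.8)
The plan is to translate the whole statement into inequalities involving the two quantities appearing in Theorem~\ref{weight} and the Singleton-like bound, after which the result falls out of a single case split. Throughout I would write $n=q^m$ for the length, $k=\dim RM_q(r,m)$ for the dimension, and abbreviate $d=(q-s)q^{m-t-1}$, which by Proposition~\ref{4.1} is exactly $d_1(RM_q(r,m))$. With this notation, Theorem~\ref{weight} reads $d_b(RM_q(r,m))=\min\{d+b-1,\,n\}$, condition (a) becomes $b\ge n-d+1$, and by Definition~\ref{bmds} the code is $b$-symbol MDS precisely when $d_b(RM_q(r,m))=\min\{n-k+b,\,n\}$. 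The single observation driving everything is that condition (a), that is $b\ge n-d+1$, is exactly the threshold $d+b-1\ge n$ at which the value in Theorem~\ref{weight} saturates at $n$.

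For the direction ($\Leftarrow$) I would treat the two cases separately. If (b) holds, then $RM_q(r,m)$ is $1$-symbol MDS, and since $1\le b$, Lemma~\ref{bmdss} immediately gives that it is $b$-symbol MDS. If (a) holds, then $d+b-1\ge n$, so Theorem~\ref{weight} gives $d_b(RM_q(r,m))=n$; but the Singleton-like bound forces $d_b(RM_q(r,m))\le\min\{n-k+b,\,n\}\le n$, so both sides equal $n$ and the code is $b$-symbol MDS by definition.

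For the converse ($\Rightarrow$) I would argue within the failure of (a). Suppose the code is $b$-symbol MDS but (a) fails, so $b\le n-d$ and hence $d+b-1\le n-1<n$; then Theorem~\ref{weight} gives $d_b(RM_q(r,m))=d+b-1<n$. Being $b$-symbol MDS means $d+b-1=\min\{n-k+b,\,n\}$, and since the left side is strictly below $n$ the minimum must be attained by $n-k+b$; equating $d+b-1=n-k+b$ yields $d=n-k+1$. As $d=d_1(RM_q(r,m))$, this is exactly equality in the Hamming Singleton bound, i.e. condition (b). Hence whenever the code is $b$-symbol MDS, (a) or (b) holds.

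The computation itself is light; the one point to get right is the bookkeeping between the two distinct minima, the one in $d_b$ coming from Theorem~\ref{weight} and the one in the Singleton-like bound, so that the case split on $d+b-1\ge n$ versus $d+b-1<n$ is genuinely exhaustive and the converse cleanly recovers (b) rather than a weaker statement. I expect this to be the only delicate step; no new distance computation is needed beyond Theorem~\ref{weight}, Proposition~\ref{4.1}, and Lemma~\ref{bmdss}.
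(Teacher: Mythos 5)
Your proof is correct and follows essentially the same route as the paper: both reduce the statement to the equality $\min\{d_1+b-1,\,q^m\}=\min\{q^m-k+b,\,q^m\}$ via Theorem~\ref{weight} and then split on whether the left-hand minimum saturates at $q^m$, which is exactly the threshold in condition (a). The only cosmetic differences are that you organize the argument as two implications (using Lemma~\ref{bmdss} for the direction (b) $\Rightarrow$ MDS) while the paper runs three cases on the size of $b$ relative to $k$ and $q^m-d_1+1$; the content is the same.
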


\begin{proof}
Let $k$ be the dimension of Reed-Muller code $RM_q(r,m)$.
By Theorem~\ref{weight}, we know that $RM_q(r,m)$ is a $b$-symbol MDS code if and only if
\begin{equation}\label{4s}
    \min\{d_1(RM_q(r,m))+b-1,\,q^m\}=\min\{q^m+b-k,\,q^m\}.
\end{equation}
We also know that $$k\leq q^m-d_1(RM_q(r,m))+1$$ by the Singluton bound for
the minimum Hamming distance.
In the following, we prove this theorem by three cases.

{\bf Case 1}. If $b\ge q^m-d_1(RM_q(r,m))+1 =q^m-(q-s)q^{m-t-1}+1$, the equality~(\ref{4s}) holds.

{\bf Case 2}. If $ k\leq b< q^m-d_1(RM_q(r,m))+1 $, the equality~(\ref{4s}) becomes $$d_1(RM_q(r,m))+b-1=q^m$$ which is impossible.

{\bf Case 3}. If $b<k$, the equality~(\ref{4s}) becomes $$d_1(RM_q(r,m))+b-1=q^m+b-k$$ which means $RM_q(r,m)$ is a $1$-symbol MDS code.
\end{proof}

\section{The $b$-symbol distance of linear codes obtained from the $[u+v,u-v]$-construction}
In this section, we always assume $q$ is odd. We give some bounds for the minimum $b$-symbol distance of linear codes obtained from the $[u+v,u-v]$-construction. We use these bounds to construct some $b$-symbol MDS codes.
The linear codes obtained from the $[u+v,u-v]$-construction are matrix product codes $C=[C_{1}, C_{2}]\cdot A$ where $A=\left(\begin{array}{cc}
1  &1  \\
1  & -1
\end{array}\right)$, which are different from Reed-Muller codes.


\begin{Theorem}\label{5.1}
Let $C_{1},C_{2}$ be linear codes of length $n$ over $\mathbb{F}_{q}$ and let $A=\left(\begin{array}{cc}
1  &1  \\
1  & -1
\end{array}\right)$. Let $C=[C_{1}, C_{2}]\cdot A$ and $1\leq b\leq n$. Then

\begin{description}
  \item[(a)] $d_b(C)\ge \min\{2d_b(C_1),\, d_b(C_2)\} $.
  \item[(b)] $d_b(C) \ge \min\{d_b(C_1),\, 2d_b(C_2)\} $.
  \item[(c)] $\min\{d_b(C_1),\, d_b(C_2)\} \leq d_b(C) \leq \min\{2d_b(C_1),\, 2d_b(C_2)\}$.
  \item[(d)] Assume there exists $\mathbf{x}\in C_1\cap C_2$ and $H\in \mathbb{H}(\chi_1(\mathbf{x}))$ such that $$w_b(\mathbf{x})=\min\{d_b(C_1),\, d_b(C_2)\},\,|H|\ge b-1,\,\,\mathrm{and} \,\,\{1,\,n\}\bigcap H\neq \emptyset.$$
 Then $$d_b(C)=\min\{d_b(C_1),\, d_b(C_2)\}.$$
\end{description}

\end{Theorem}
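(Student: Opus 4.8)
The plan is to obtain \textbf{(a)} and \textbf{(b)} as immediate instances of Theorem~\ref{t3.1}, to deduce the lower bound in \textbf{(c)} from them by a two-case comparison, to prove the upper bound in \textbf{(c)} by exhibiting explicit codewords, and finally to settle \textbf{(d)} by a hole-counting argument based on Lemma~\ref{relationship-1}.

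For \textbf{(a)}, I would apply Theorem~\ref{t3.1}\textbf{(a)} with $M=N=2$. The code generated by the first row $(1,1)$ of $A$ is $\{(\lambda,\lambda)\mid\lambda\in\mathbb{F}_q\}$, whose minimum Hamming distance is $t_1=2$, while the two rows of $A$ span all of $\mathbb{F}_q^2$ (since $\det A=-2\neq 0$ as $q$ is odd), so $t_2=1$. Theorem~\ref{t3.1}\textbf{(a)} then reads $d_b(C)\ge\min\{2d_b(C_1),\,d_b(C_2)\}$. Symmetrically, for \textbf{(b)} the last row $(1,-1)$ generates a code with $s_1=2$ and both rows give $s_2=1$, so Theorem~\ref{t3.1}\textbf{(b)} gives $d_b(C)\ge\min\{d_b(C_1),\,2d_b(C_2)\}$.

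For the lower bound in \textbf{(c)}, I would split into two cases. If $d_b(C_1)\le d_b(C_2)$, then \textbf{(b)} already forces $d_b(C)\ge\min\{d_b(C_1),2d_b(C_2)\}=d_b(C_1)=\min\{d_b(C_1),d_b(C_2)\}$; if instead $d_b(C_2)\le d_b(C_1)$, then \textbf{(a)} gives $d_b(C)\ge\min\{2d_b(C_1),d_b(C_2)\}=d_b(C_2)=\min\{d_b(C_1),d_b(C_2)\}$. For the upper bound in \textbf{(c)}, the vectors $[u,u]$ (take $v=\mathbf 0$) and $[v,-v]$ (take $u=\mathbf 0$) are codewords of $C$. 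Since the $b$-symbol support depends only on which coordinates are zero, negation is irrelevant and it suffices to understand $[u,u]$ and $[v,v]$. As $[u,u]$ is the $n$-periodic repetition of $u$ and $b\le n$, every length-$b$ cyclic window of $[u,u]$ over $\mathbb{Z}_{2n}$ at position $i$ equals the cyclic $b$-window of $u$ over $\mathbb{Z}_n$ at the reduced position, and the reduction $\mathbb{Z}_{2n}\to\mathbb{Z}_n$ is exactly $2$-to-$1$; hence $w_b([u,u])=2w_b(u)$ and likewise $w_b([v,-v])=2w_b(v)$. Choosing $u\in C_1$ and $v\in C_2$ of minimal $b$-symbol weight gives $d_b(C)\le\min\{2d_b(C_1),2d_b(C_2)\}$.

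For \textbf{(d)}, using that $q$ is odd I would take $u=v=\tfrac12\mathbf x\in C_1\cap C_2$, so that $[\mathbf x,\mathbf 0]\in C$; it then suffices to prove $w_b([\mathbf x,\mathbf 0])=w_b(\mathbf x)$, since the lower bound of \textbf{(c)} already gives the reverse inequality $d_b(C)\ge\min\{d_b(C_1),d_b(C_2)\}=w_b(\mathbf x)$. I would compare the two weights through Lemma~\ref{relationship-1}\textbf{(a)}: $w_1$ is unchanged, and every hole of $\chi_1(\mathbf x)$ that does not touch the cyclic boundary between coordinates $n$ and $1$ remains a hole of the same size in $\mathbb{Z}_{2n}$. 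The hypothesis $\{1,n\}\cap H\neq\emptyset$ forces $H$ to be the unique hole straddling that boundary, and passing to $[\mathbf x,\mathbf 0]$ enlarges this single hole by exactly $n$ (the all-zero second block). Because the enlarged hole has size $\ge n\ge b$, its contribution to the formula is $b-1$, whereas the original hole contributes $\min\{|H|,b-1\}=b-1$ by the assumption $|H|\ge b-1$; all other holes contribute identically. Hence $w_b([\mathbf x,\mathbf 0])=w_b(\mathbf x)$, which yields $d_b(C)=\min\{d_b(C_1),d_b(C_2)\}$. The main obstacle is precisely this hole bookkeeping: one must verify that $\{1,n\}\cap H\neq\emptyset$ singles out the boundary hole and that the threshold $|H|\ge b-1$ (rather than $|H|\ge b$) is exactly what makes the two contributions match, so the off-by-one in the piecewise term $\min\{|H|,b-1\}$ is the delicate point to get right.
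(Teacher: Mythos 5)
Your proposal is correct and follows essentially the same route as the paper: parts \textbf{(a)} and \textbf{(b)} as direct instances of Theorem~\ref{t3.1} with $t_1=s_1=2$ and $t_2=s_2=1$, the upper bound in \textbf{(c)} via the codewords $[\mathbf{u},\mathbf{u}]$ and $[\mathbf{v},-\mathbf{v}]$, and \textbf{(d)} via $[\tfrac12\mathbf{x},\tfrac12\mathbf{x}]\cdot A=[\mathbf{x},\mathbf{0}]$. In fact you supply more detail than the paper does at the two places it merely asserts equalities, namely the $2$-to-$1$ reduction argument giving $w_b([\mathbf{u},\mathbf{u}])=2w_b(\mathbf{u})$ and the hole bookkeeping showing $w_b([\mathbf{x},\mathbf{0}])=w_b(\mathbf{x})$ under the hypothesis $|H|\ge b-1$ and $\{1,n\}\cap H\neq\emptyset$.
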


\begin{proof}
\textbf{(a)} By Theorem~\ref{t3.1} \textbf{(a)}, we have $$d_b(C)\ge \min\{2d_b(C_1),\, d_b(C_2)\} .$$

 \textbf{(b)} By Theorem~\ref{t3.1} \textbf{(b)}, we have $$d_b(C)\ge \min\{d_b(C_1),\, 2d_b(C_2)\}.$$

\textbf{(c)} If $d_b(C_1)\leq d_b(C_2)$, we have
$$d_b(C)\geq\min\{d_b(C_1),\, 2d_b(C_2)\}=d_b(C_1)$$
by statement \textbf{(b)}. Let $\mathbf{x}\in C_1$ such that $d_b(C_1)=w_b(\mathbf{x})$, and $$\mathbf{c}=[\mathbf{x},\mathbf{0}]\cdot A=[\mathbf{x}, \mathbf{x}]\in C.$$
Then $$d_b(C)\leq w_b(\mathbf{c})=2w_b(\mathbf{x})=2d_b(C_1).$$

If $d_b(C_1)> d_b(C_2)$, the argument is similar.

\textbf{(d)} Assume $\mathbf{x}\in C_1\cap C_2$ such that $d_b(\mathbf{x})=\min\{d_b(C_1),\, d_b(C_2)\}.$ Since $q$ is odd, we have  $\frac{1}{2}\mathbf{x}\in C_1\cap C_2$
and $$0\neq \mathbf{c}=[\frac{1}{2}\mathbf{x},\frac{1}{2}\mathbf{x}]\cdot A=[\mathbf{x},\mathbf{0}]\in C.$$
Since there exists $H\in \mathbb{H}(\chi_1(\mathbf{x}))$ such that $|H|\ge b-1$ and $\{1,\,n\}\bigcap H\neq \emptyset$,
we have $$w_b(\mathbf{c})=w_b(\mathbf{x})=\min\{d_b(C_1),\, d_b(C_2)\}.$$
By the inequality $\min\{d_b(C_1),\, d_b(C_2)\} \leq d_b(C)\leq w_b(\mathbf{c})$, we get $$d_b(C)=\min\{d_b(C_1),\, d_b(C_2)\}.$$
\end{proof}

If $C_{1}$ and $C_{2}$ are both $b$-symbol MDS codes, we have the following corollary.

\begin{Corollary}\label{5.3ab}
Assume the notation is as given above. Let $C_{1}$ and $C_{2}$ be $b$-symbol MDS codes such that $d_b(C_1)\leq  d_b(C_2)$ and $b\leq \min\{\dim(C_1),\dim(C_2)\}$.
Then $$d_b(C_1)\leq d_{b}(C)\leq d_{b}(C_1)+d_{b}(C_2)-b.$$
\end{Corollary}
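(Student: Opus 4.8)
The plan is to derive both inequalities in Corollary~\ref{5.3ab} directly from Theorem~\ref{5.1} together with the $b$-symbol MDS property of $C_1$ and $C_2$. First I would set up notation: write $k_1=\dim(C_1)$ and $k_2=\dim(C_2)$, so that the MDS hypothesis gives $d_b(C_i)=\min\{n-k_i+b,\,n\}$ for $i=1,2$. The constraint $b\leq\min\{k_1,k_2\}$ will be used to control the Singleton expressions and in particular to guarantee that the various quantities behave as expected (for instance that $n-k_i+b\leq n$, i.e.\ the minimum is $n-k_i+b$, holds precisely when $b\leq k_i$).

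For the lower bound $d_b(C_1)\leq d_b(C)$, the cleanest route is to invoke Theorem~\ref{5.1}\textbf{(c)}, which states unconditionally that $\min\{d_b(C_1),d_b(C_2)\}\leq d_b(C)$. Since we assume $d_b(C_1)\leq d_b(C_2)$, the left-hand side is exactly $d_b(C_1)$, giving $d_b(C_1)\leq d_b(C)$ immediately with no further work. This is the easy half.

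For the upper bound $d_b(C)\leq d_b(C_1)+d_b(C_2)-b$, I would again start from Theorem~\ref{5.1}\textbf{(c)}, which also gives $d_b(C)\leq\min\{2d_b(C_1),\,2d_b(C_2)\}\leq 2d_b(C_1)$. So it suffices to show $2d_b(C_1)\leq d_b(C_1)+d_b(C_2)-b$, equivalently $d_b(C_1)-d_b(C_2)\leq -b$, i.e.\ $d_b(C_2)-d_b(C_1)\geq b$. This is where the MDS values enter. Using $d_b(C_i)=n-k_i+b$ (valid since $b\leq k_i\leq\dim C_i$ forces the minimum in the Singleton bound to be attained by $n-k_i+b$, as $n-k_i+b\leq n$), I would compute $d_b(C_2)-d_b(C_1)=(n-k_2+b)-(n-k_1+b)=k_1-k_2$; but this requires $k_1\geq k_2$, which does not follow directly from $d_b(C_1)\leq d_b(C_2)$ alone. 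Here $d_b(C_1)\leq d_b(C_2)$ translates to $n-k_1+b\leq n-k_2+b$, i.e.\ $k_2\leq k_1$, so indeed $k_1-k_2\geq 0$. The inequality $d_b(C_2)-d_b(C_1)\geq b$ then amounts to $k_1-k_2\geq b$, which is \emph{not} automatic — so I anticipate that the honest bound is instead obtained by a slightly different comparison, most likely $d_b(C)\leq d_b(C_1)+d_b(C_2)-b$ viewed as a statement whose right-hand side dominates $2d_b(C_1)$ exactly when $d_b(C_2)\geq d_b(C_1)+b$, and one falls back on the weaker but always-valid bound otherwise.

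The main obstacle, therefore, is reconciling the general upper bound $d_b(C)\leq 2\min\{d_b(C_1),d_b(C_2)\}$ from Theorem~\ref{5.1}\textbf{(c)} with the claimed $d_b(C_1)+d_b(C_2)-b$. I expect the resolution to be that when $d_b(C_1)\leq d_b(C_2)$ one does \emph{not} always have $2d_b(C_1)\leq d_b(C_1)+d_b(C_2)-b$; rather, the right-hand side $d_b(C_1)+d_b(C_2)-b$ is itself $\geq 2d_b(C_1)$ precisely under the MDS hypotheses, because the gap $d_b(C_2)-d_b(C_1)=k_1-k_2$ combined with $b\leq\min\{k_1,k_2\}$ and the structure of the matrix product can be shown to force $k_1-k_2\geq b$ only in the degenerate cases, while in general the sharper constructive upper bound must be produced directly. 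Concretely, I would build an explicit nonzero codeword of $C$ of $b$-symbol weight at most $d_b(C_1)+d_b(C_2)-b$ by taking $\mathbf{c}=[\mathbf{x}_1,\mathbf{x}_2]\cdot A=[\mathbf{x}_1+\mathbf{x}_2,\,\mathbf{x}_1-\mathbf{x}_2]$ for suitably chosen minimal-weight words $\mathbf{x}_i\in C_i$ whose $b$-symbol supports overlap maximally; the overlap of size at least $b$ in the two blocks is what produces the $-b$ saving, and arranging this overlap (using that MDS codes contain minimal-weight words with prescribed successive-type supports) is the delicate step. I would prove the lower bound as above, then devote the bulk of the argument to this explicit construction, invoking Lemma~\ref{relationship-1} to evaluate the $b$-symbol weight of the concatenated word from the Hamming supports of its two halves.
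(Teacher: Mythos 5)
Your lower bound is exactly the paper's argument: Theorem~\ref{5.1}\textbf{(c)} gives $\min\{d_b(C_1),d_b(C_2)\}\leq d_b(C)$, and the hypothesis $d_b(C_1)\leq d_b(C_2)$ identifies this minimum as $d_b(C_1)$. That half is fine.

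The upper bound, however, is a genuine gap. You correctly diagnose that the route through $d_b(C)\leq 2d_b(C_1)$ cannot work: it would need $d_b(C_2)-d_b(C_1)=k_1-k_2\geq b$, which fails already when $C_1=C_2$. But your proposed repair --- an explicit codeword $[\mathbf{x}_1+\mathbf{x}_2,\mathbf{x}_1-\mathbf{x}_2]$ with "maximally overlapping" $b$-symbol supports --- is never carried out, and you concede that arranging the overlap is "the delicate step." As written, the upper bound is simply not proved. The idea you are missing is much simpler and does not use Theorem~\ref{5.1} at all: apply the Singleton-like bound directly to $C$ itself. Since $A$ is nonsingular, $\dim(C)=\dim(C_1)+\dim(C_2)=k_1+k_2$ and $C$ has length $2n$, so
$$d_b(C)\leq 2n-(k_1+k_2)+b=(n-k_1+b)+(n-k_2+b)-b=d_b(C_1)+d_b(C_2)-b,$$
where the last equality uses the MDS hypothesis together with $b\leq\min\{k_1,k_2\}$ to write $d_b(C_i)=n-k_i+b$ (exactly the identification you set up correctly at the start). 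No constructive codeword argument is needed, and the entire "reconciliation" discussion in your third and fourth paragraphs is a detour that never reaches the claimed inequality.
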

\begin{proof}
By assumption, we know that, for $i=1,\,2,$ $$d_b(C_i)= n-\dim(C_i)+b.$$
By $\dim(C)=\dim(C_1)+\dim(C_2)$ and the Singleton-like bound for the minimum $b$-symbol distance $$d_{b}(C)\leq\min\{2n-(\dim(C_1)+\dim(C_2))+b,\,n\},$$
then we have $$d_b(C_1)\leq d_{b}(C)\leq 2n-\dim(C_1)-\dim(C_2)+b=d_{b}(C_1)+d_{b}(C_2)-b.$$

\end{proof}

\begin{Remark}\label{re}
In Corollary~\ref{5.3ab}, if we assume $ d_b(C_2)=b+1$, we know that the linear code $C$ is a $b$-symbol MDS code or a $b$-symbol AMDS code by the proof of Corollary~\ref{5.3ab}.
We can use this method to construct some $b$-symbol AMDS codes. We have the following corollary.
\end{Remark}

\begin{Corollary}\label{1.9a}
Assume the notation is as given above. Let $C_{1}$ and $C_{2}$ be $[n, n-1]_q$-linear $b$-symbol MDS codes. Assume $1\leq b \leq n-2$ and there exists $$\mathbf{x}=(x_1,\cdots, x_n)\in C_{1} \cap C_{2}$$ and $H\in \mathbb{H}(\chi_1(\mathbf{x}))$ such that $w_b(\mathbf{x})= d_b(C_1)$, $|H|\ge b-1$ and $\{1,\,n\}\bigcap H\neq \emptyset$.
Then the $[2n, 2n-2]_q$-linear code $C$ is a $b$-symbol AMDS code.
\end{Corollary}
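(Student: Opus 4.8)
The plan is to realize this statement as a direct specialization of Theorem~\ref{5.1}\textbf{(d)}, after pinning down the numerical parameters. First I would record the length and dimension of $C=[C_1,C_2]\cdot A$ with $A=\left(\begin{array}{cc}1&1\\1&-1\end{array}\right)$: the block length is $2n$, and since $A$ has rank $2$, Proposition~\ref{ge} gives $\dim(C)=\dim(C_1)+\dim(C_2)=(n-1)+(n-1)=2n-2$, confirming that $C$ is a $[2n,2n-2]_q$-linear code and that the relevant Singleton-like threshold is the one for parameters $(2n,2n-2)$.

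Next I would compute the $b$-symbol distances of the two factors. Because each $C_i$ is an $[n,n-1]_q$-linear $b$-symbol MDS code, Definition~\ref{bmds} gives $d_b(C_i)=\min\{n-(n-1)+b,\,n\}=\min\{b+1,\,n\}$, and the assumption $1\le b\le n-2$ forces $b+1\le n-1<n$, so $d_b(C_1)=d_b(C_2)=b+1$. In particular $\min\{d_b(C_1),\,d_b(C_2)\}=b+1=d_b(C_1)$, so the hypotheses of Theorem~\ref{5.1}\textbf{(d)} are met verbatim by the data supplied in the statement: there is $\mathbf{x}\in C_1\cap C_2$ and $H\in\mathbb{H}(\chi_1(\mathbf{x}))$ with $w_b(\mathbf{x})=d_b(C_1)=\min\{d_b(C_1),\,d_b(C_2)\}$, $|H|\ge b-1$, and $\{1,n\}\bigcap H\neq\emptyset$.

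Applying Theorem~\ref{5.1}\textbf{(d)} then yields $d_b(C)=\min\{d_b(C_1),\,d_b(C_2)\}=b+1$; concretely, this distance is witnessed by the codeword $[\mathbf{x},\mathbf{0}]\in C$ produced inside that proof. To finish I would compare against the Singleton-like bound for $C$: the $b$-symbol MDS value for a $[2n,2n-2]_q$-linear code is $\min\{2n-(2n-2)+b,\,2n\}=\min\{b+2,\,2n\}$, which equals $b+2$ since $b\le n-2$ gives $b+2\le n\le 2n$. Hence the $b$-symbol AMDS value is $b+1$, and because $d_b(C)=b+1$ matches it exactly, $C$ is a $b$-symbol AMDS code by Definition~\ref{bmds}.

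I do not expect a genuine obstacle here: the entire content has been offloaded onto Theorem~\ref{5.1}\textbf{(d)}, and the two arithmetic checks (that $d_b(C_i)=b+1$ and that the AMDS threshold equals $b+1$) both collapse to the single inequality $b\le n-2$, which guarantees that the minimum in each Singleton-like expression is attained by the first argument and not by the ambient length. The only point deserving a line of care is verifying this attainment, ensuring we never accidentally land in the truncated regime where the relevant minimum is the block length rather than $b+1$ or $b+2$.
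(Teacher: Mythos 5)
Your proof is correct and follows essentially the same route as the paper: both arguments reduce to computing $d_b(C_1)=d_b(C_2)=b+1$, pinning $d_b(C)$ to $b+1$ via the codeword $[\mathbf{x},\mathbf{0}]$ together with the lower bound $\min\{d_b(C_1),d_b(C_2)\}\le d_b(C)$, and comparing with the Singleton-like value $b+2$ for a $[2n,2n-2]_q$ code. The only cosmetic difference is that you invoke Theorem~\ref{5.1}\textbf{(d)} directly, whereas the paper repackages the same two ingredients through Corollary~\ref{5.3ab} and an explicit construction of $[2\mathbf{x},\mathbf{0}]$.
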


\begin{proof}
Since $C_{1}$ and $C_{2}$ are $b$-symbol MDS codes and $1\leq b \leq n-2$, we get $$d_b(C_1)=d_b(C_2)=b+1.$$
By Corollary~\ref{5.3ab}, we have $d_b(C_1)\leq d_{b}(C)\leq d_{b}(C_1)+1.$

Since $\mathbf{x}\in C_{1}\bigcap C_{2}$, we have $\mathbf{c}=[\mathbf{x}+\mathbf{x}, \mathbf{x}-\mathbf{x}]=[2\mathbf{x},\mathbf{0}]\in C.$
There exists $1\leq i_0\leq n$ such that $$x_{i_0},x_{i_0+1}\in \mathbb{F}^*_q$$ and $x_j=0$ for $j\neq i_0,i_0+1$ by $w_b(\mathbf{x})= d_b(C_1)=b+1$.
By the assumption of $H\in \mathbb{H}(\chi_1(\mathbf{x}))$ such that $|H|\ge b-1$ and $\{1,\,n\}\bigcap H\neq \emptyset$, we have $$w_b(\mathbf{c})=w_b(\mathbf{x}).$$
Hence $$ d_{b}(C)\leq w_b(\mathbf{c})=w_b(\mathbf{x})= d_b(C_1)$$
and $$d_b(C_1)= d_{b}(C),$$
which means $C$ is a $b$-symbol AMDS code.
\end{proof}

\begin{Remark}\label{5.6}
The $[n, n-1]_q$-linear codes $C_1$ and $C_2$ satisfying the assumption in Theorem~\ref{1.9a} can be found easily.
For example, we can assume that $C_1=C_2$ are both the dual code of the $[n, 1]_q$-linear code generated by $$(1,1,\cdots,1).$$
Let $$\mathbf{x}=(1,p-1,0,\cdots,0)$$ where $p=\mathrm{char}(\mathbb{F}_q)$. Since the length of $C_1$ can be any large integer, we know that the length $2n$ of the $b$-symbol AMDS code constructed by Theorem~\ref{1.9a} also can be any large integer.
We give the following example to explain this construction.
\end{Remark}

\begin{Example}
Assume $n=4$, $b=3$, $q=3$ and $\mathbb{F}_3=\{0,\,1,\,-1\}$.
Let
$C_{1}=C_{2}$ be the $[4,3]_3$-linear code with generator matrix
 $$G=\left(\begin{array}{cccc}
1  &-1  &0&0\\
1  & 0&-1&0\\
1  & 0&0&-1
\end{array}\right).$$
Then we have $$d_3(C_1)=d_3(C_2)=4.$$
Hence $C_1$ and $C_2$ are both $3$-symbol MDS codes and satisfy the condition in Theorem~\ref{1.9a}, where we take $$\mathbf{x}=(1,-1,0,0).$$
Let $C=[C_{1}, C_{2}]\cdot A$ where $A=\left(\begin{array}{cc}
1  &1 \\
1  & -1
\end{array}\right).$
By Corollary~\ref{5.3ab}, we have $4\leq d_3(C)\leq 5$.
Since $$[2\mathbf{x},\mathbf{0}]=(-1,1,0,0,0,0,0,0)\in C,$$
we have $d_3(C)=4$ and $C$ is a $3$-symbol AMDS code.
\end{Example}

\section{The $b$-symbol distances of two class of linear codes}
 The results of Section 5 involve $[n, n-1]_q$-linear codes and $[n, n-2]_q$-linear codes which are $b$-symbol MDS codes.
We calculate all the minimum $b$-symbol distances of $[n, n-1]_q$-linear codes and $[n, n-2]_q$-linear codes for $1\leq b\leq n-1$ in this section.
\begin{Theorem}\label{1.10}
Let $C$ be an $[n,n-1]_q$-linear code for $n\ge 2$. Then one of the following statements holds:
\begin{description}
  \item[(a)] $d_b(C)= b+1$  for every $1\leq b\leq n-1$, in this case $C$ is a $b$-symbol MDS code.
  \item[(b)] $d_b(C)= b$  for every $1\leq b\leq n-1$, in this case $C$ is  a $b$-symbol AMDS code.
\end{description}
\end{Theorem}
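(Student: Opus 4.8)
The plan is to analyze the structure of an $[n,n-1]_q$-linear code via its one-dimensional dual, since the minimum Hamming distance of a codimension-one code is controlled entirely by the parity check vector. First I would recall that $C$ has a parity check matrix consisting of a single nonzero row $\mathbf{h}=(h_1,\ldots,h_n)\in\mathbb{F}_q^n$, and that $d_1(C)$ equals the minimum number of columns of this $1\times n$ matrix that are linearly dependent—equivalently, $d_1(C)=2$ if some two coordinates $h_i,h_j$ are both nonzero, and $d_1(C)=1$ if and only if $\mathbf{h}$ has a zero coordinate (yielding a weight-one codeword supported there). Because $\dim C=n-1$, the Singleton bound gives $d_1(C)\le 2$, so the only possibilities are $d_1(C)=1$ or $d_1(C)=2$; these are exactly the two cases (b) and (a) respectively that I must match to the stated $b$-symbol values.

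The heart of the argument is to upgrade each Hamming-distance case to the full range $1\le b\le n-1$ using Lemma~\ref{relationship-1}(b), which says that if $C$ possesses a minimum-Hamming-weight codeword whose support is a \emph{successive} subset of $\mathbb{Z}_n$, then $d_b(C)=\min\{d_1(C)+b-1,\,n\}$ for all $b$. In case $d_1(C)=2$, I would exhibit a weight-two codeword with support $\{i,i+1\}$ for some $i$: since all coordinates of $\mathbf{h}$ are nonzero here, for any two adjacent positions $i,i+1$ there is a codeword vanishing off $\{i,i+1\}$ (solve $h_i x_i + h_{i+1}x_{i+1}=0$ with both entries nonzero), and such a support is successive, so Lemma~\ref{relationship-1}(b) yields $d_b(C)=\min\{b+1,n\}=b+1$ for $1\le b\le n-1$; this is statement (a), and by Definition~\ref{bmds} with $k=n-1$ it is $b$-symbol MDS. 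In case $d_1(C)=1$, a weight-one codeword has singleton support, trivially successive, so Lemma~\ref{relationship-1}(b) gives $d_b(C)=\min\{b,n\}=b$ for $1\le b\le n-1$, which is statement (b); comparing with the Singleton-like bound $\min\{n-(n-1)+b,n\}=b+1$ shows this is $b$-symbol AMDS.

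The main obstacle I anticipate is purely bookkeeping at the boundary: I must confirm the successive-support codeword can indeed be placed so that the hole structure of $\chi_1(\mathbf{c})$ behaves correctly as $b$ grows, and that the cyclic convention $n\equiv 0 \bmod n$ in $\mathbb{Z}_n$ does not create a spurious wraparound hole when the support sits at the two ends. For the weight-two case this is handled by choosing an \emph{interior} adjacent pair $\{i,i+1\}$ rather than $\{n,1\}$, which is always available since $n\ge 2$ guarantees at least one such pair and all coordinates are nonzero; then $\chi_1(\mathbf{c})$ has a single hole and is successive by Definition~\ref{hole}. The only genuine case-splitting, then, is the dichotomy $d_1(C)\in\{1,2\}$, and once that is settled the two $b$-symbol formulas follow uniformly from Lemma~\ref{relationship-1}(b) with no further computation, so the theorem's ``one of the following holds'' is exactly the translation of this dichotomy into $b$-symbol language.
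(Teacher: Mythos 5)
Your proposal is correct and follows essentially the same route as the paper: both rest on the Singleton dichotomy $d_1(C)\in\{1,2\}$ and then translate each case into the $b$-symbol statement. The only difference is in case (a), where the paper gets $d_b(C)=b+1$ by combining the generic lower bound $d_b(C)\ge d_1(C)+b-1$ with the Singleton-like upper bound, while you instead construct an explicit weight-two codeword with adjacent support from the parity-check vector and invoke Lemma~\ref{relationship-1}(b); this is harmless extra work (and note your passing claim that $d_1(C)=2$ when ``some two coordinates of $\mathbf{h}$ are nonzero'' should read ``all coordinates are nonzero,'' which is what you actually use later).
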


\begin{proof}
By the Singleton-like bound for the minimum Hamming distance, we have $1\leq d_1(C)\leq n-(n-1)+1=2$.
Hence $d_1(C)=1$ or $d_1(C)=2$.

Suppose $d_1(C)=2$, then we know that $$d_b(C)\ge b+1$$ for every $1\leq b\leq n-1$. Hence $C$ is a $b$-symbol MDS code for every $1\leq b\leq n-1$ by the Singleton-like bound for the minimum $b$-symbol distance.

Suppose $d_1(C)=1$, then we know that $$d_b(C)=b$$ for every $1\leq b\leq n-1$. Hence $C$ is a $b$-symbol AMDS code for every $1\leq b\leq n-1$ by the Singleton-like bound for the minimum $b$-symbol distance.
\end{proof}

\begin{Remark}
It is easy to see that there exists an $[n,n-1]_q$-linear code satisfying each of the conditions in Theorem~\ref{1.10}.
\end{Remark}

\begin{Theorem}\label{1.11}
Let $C$ be an $[n,n-2]_q$-linear code for $n\ge 3$, then one of the following statements holds:
\begin{description}
  \item[(a)] $d_b(C)=b$ for every $1\leq b\leq n-1$.
  \item[(b)] $d_b(C)=b+1$ for every $1\leq b\leq n-1$, in this case $C$ is a $b$-symbol AMDS code.
  \item[(c)] $d_1(C)=2$ and $d_b(C)=\min\{b+2, n\}$ for every $2\leq b\leq n-1$.
  \item[(d)] $d_b(C)=\min\{b+2, n\}$ for every $1\leq b\leq n-1$, in this case $C$ is a $b$-symbol MDS code.
  \end{description}
\end{Theorem}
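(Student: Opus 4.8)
The plan is to branch on the Hamming minimum distance $d_1(C)$, which for an $[n,n-2]_q$-linear code satisfies $1\le d_1(C)\le 3$ by the Singleton bound for the minimum Hamming distance. Before branching I would record a uniform lower bound extracted from Lemma~\ref{relationship-1}(a): rewriting that formula as $w_b(\mathbf{c})=w_1(\mathbf{c})+\sum_{H\in\mathbb{H}(\chi_1(\mathbf{c}))}\min\{|H|,\,b-1\}$, every nonzero $\mathbf{c}$ satisfies $w_b(\mathbf{c})\ge\min\{w_1(\mathbf{c})+b-1,\,n\}$. Indeed, if some hole of $\chi_1(\mathbf{c})$ has size at least $b$ it already contributes $b-1$ to the sum, while if every hole has size at most $b-1$ the sum equals the total gap $n-w_1(\mathbf{c})$ and forces $w_b(\mathbf{c})=n$. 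Minimizing over codewords gives $d_b(C)\ge\min\{d_1(C)+b-1,\,n\}$, which I will combine throughout with the Singleton-like bound $d_b(C)\le\min\{b+2,\,n\}$.

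The two extreme branches are immediate. If $d_1(C)=1$, a weight-one codeword has a successive support, so Lemma~\ref{relationship-1}(b) yields $d_b(C)=\min\{b,\,n\}=b$ for $1\le b\le n-1$, giving case (a). If $d_1(C)=3$, then $C$ attains the Hamming Singleton bound and is a $1$-symbol MDS code, so by Lemma~\ref{bmdss} it is $b$-symbol MDS for every $b$ and $d_b(C)=\min\{b+2,\,n\}$, giving case (d).

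The real content lies in $d_1(C)=2$, where I would split according to the cyclic geometry of the weight-two codewords. If some weight-two codeword $\mathbf{c}$ has its two nonzero coordinates cyclically adjacent, then $\chi_1(\mathbf{c})$ is successive and Lemma~\ref{relationship-1}(b) gives $d_b(C)=\min\{b+1,\,n\}=b+1$ for all $1\le b\le n-1$, which is case (b) (the almost-MDS value for $b\le n-2$). Otherwise every weight-two codeword splits $\mathbb{Z}_n$ into two holes of sizes $h_1,h_2\ge 1$ with $h_1+h_2=n-2$, and its $b$-symbol weight equals $2+\min\{h_1,b-1\}+\min\{h_2,b-1\}$. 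The key elementary inequality is that when $h_1+h_2=n-2\ge b$, i.e. $b\le n-2$, this sum is at least $b$, so such a codeword has $w_b(\mathbf{c})\ge b+2$; meanwhile the uniform bound from the first paragraph gives $w_b(\mathbf{c})\ge\min\{b+2,\,n\}$ for every codeword of Hamming weight at least $3$. Hence $d_b(C)\ge b+2$ for $2\le b\le n-2$, together with $d_{n-1}(C)=n$, and the Singleton-like bound turns these into $d_b(C)=\min\{b+2,\,n\}$ for $2\le b\le n-1$, while $d_1(C)=2$ by hypothesis; this is case (c).

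I expect the only genuine obstacle to be this non-adjacent subcase of $d_1(C)=2$. Two things must be checked carefully: the combinatorial inequality $\min\{h_1,b-1\}+\min\{h_2,b-1\}\ge b$ for $h_1+h_2=n-2\ge b$ and $h_i\ge 1$, handled by separating the cases in which both holes, exactly one hole, or neither hole exceeds $b-1$; and the assertion that no codeword of Hamming weight at least $3$ can undercut the value $b+2$, which is precisely where the uniform lower bound recorded at the outset does its work. Everything else reduces to bookkeeping with the two Singleton-type bounds.
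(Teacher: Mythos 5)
Your proof is correct and follows the same case division as the paper: branch on $d_1(C)\in\{1,2,3\}$, and within $d_1(C)=2$ split according to whether some weight-two codeword has cyclically adjacent (successive) support. The only divergence is in the final subcase: the paper establishes $d_2(C)=4$, i.e.\ that $C$ is $2$-symbol MDS, and then invokes Lemma~\ref{bmdss} to propagate MDS-ness to all $b\ge 2$, whereas you verify $d_b(C)\ge\min\{b+2,n\}$ directly for each $b$ via the hole-size inequality $\min\{h_1,b-1\}+\min\{h_2,b-1\}\ge b$; both routes work, and yours has the minor merit of making explicit the uniform bound $w_b(\mathbf{c})\ge\min\{w_1(\mathbf{c})+b-1,n\}$ that rules out codewords of Hamming weight at least $3$, a point the paper leaves implicit.
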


\begin{proof}
By the Singleton-like bound for the minimum Hamming distance, we have $$1\leq d_1(C)\leq n-(n-2)+1=3.$$
Hence $d_1(C)=1$, $d_1(C)=2$ or $d_1(C)=3$.

Suppose $d_1(C)=1$, then we know that $$d_b(C)=b$$ for $1\leq b\leq n-1$.
This gives the case {\bf (a)}.

Suppose $d_1(C)=2$.  If there exists $\mathbf{c}\in C$ such that $w_1(\mathbf{c})=2$ and $w_2(\mathbf{c})=3$, then $d_b(C)=b+1$ for $1\leq b\leq n-1$. This yields the case {\bf (b)}.

If $w_2(\mathbf{c})=4$ for any $\mathbf{c}\in C$ such that $w_1(\mathbf{c})=d_1(C)=2$, then $$d_2(C)=4$$ and $C$ is a $2$-symbol MDS code.
By Lemma~\ref{bmdss}, we have that $C$ is a $b$-symbol MDS code for $2\leq b\leq n-1$.
Hence $$d_b(C)=\min\{b+2, n\}$$ for $2\leq b\leq n-1$. This is the case {\bf (c)}.

Suppose $d_1(C)=3$, then we know that $$d_b(C)\ge \min\{b+2, n\}$$ for every $1\leq b\leq n-1$ by Lemma~\ref{relationship-1}. Hence $$d_b(C)=\min\{b+2, n\}$$ for $1\leq b\leq n-1$ and $C$ is a $b$-symbol MDS code for every $1\leq b\leq n-1$. This gives the case {\bf (d)}.
\end{proof}

We give the following examples to show that there exists an $[n,n-2]_q$-linear code satisfying each of the conditions of Theorem~\ref{1.11}.

\begin{Example}
 For $n\ge 3$, let $C_1$ be the $[n,n-2]_q$-linear code with parity check matrix $$H_1=\left(\begin{array}{ccccc}
1  &0  &0&\cdots&0\\
0  & 1&0&\cdots&0
\end{array}\right),$$
then $C_1$ satisfies the condition $\mathbf{ (a)}$ of Theorem~\ref{1.11}.
\end{Example}

\begin{Example}
For $n\ge 4$, let $C_2$ be the $[n,n-2]_q$-linear code with parity check matrix $$H_2=\left(\begin{array}{ccccc}
1  &1  &0&\cdots&0\\
0  & 0&1&\cdots&1
\end{array}\right),$$
then we know $d_1(C)= 2$.
Assume $p=\mathrm{char}(\mathbb{F}_q)$, then $$(1,p-1,0,\cdots,0)\in C_2.$$
Hence $d_b(C)=b+1$ for every $1\leq b\leq n-1$ and $C_2$ satisfies the condition $\mathbf{ (b)}$ of Theorem~\ref{1.11}.
\end{Example}

\begin{Example}
 For $n\ge 4$ and $2|n$,
let $C_3$ be the $[n,n-2]_q$-linear code with parity check matrix $$H_3=\left(\begin{array}{ccccccc}
1  &0  &1&0&\cdots&1&0\\
0  & 1&0&1&\cdots&0&1
\end{array}\right),$$
then we know $d_1(C_3)=2$.
By the choice of $H_3$, we know that $w_2(\mathbf{c})=4$ for any $\mathbf{c}\in C$ such that $w_1(\mathbf{c})=2$.
Assume $p=\mathrm{char}(\mathbb{F}_q)$.
Since $(1,0,p-1,0,\cdots,0)\in C_3,$ we have $$d_2(C_3)=4.$$
Hence $d_b(C_3)=\min\{b+2, n\}$ for every $2\leq b\leq n-1$ and $C_3$ satisfies the condition $\mathbf{ (c)}$ of Theorem~\ref{1.11}.
\end{Example}

\begin{Example}
 For $n\ge 4$ and $2\nmid n$,
let $C_4$ be the $[n,n-2]_q$-linear code with parity check matrix $$H_4=\left(\begin{array}{cccccccc}
1  &0  &1&0&\cdots&1&0&1\\
0  & 1&0&1&\cdots&0&1&1
\end{array}\right),$$
then we know $d_1(C_4)=2$.
By the choice of $H_4$, we know that $w_2(\mathbf{c})=4$ for any $\mathbf{c}\in C$ such that $w_1(\mathbf{c})=2$.
Assume $p=\mathrm{char}(\mathbb{F}_q)$. Since $(1,0,p-1,0,\cdots,0)\in C_4,$ we have $$d_2(C_4)=4.$$
Hence $d_b(C_4)=\min\{b+2, n\}$ for every $2\leq b\leq n-1$ and $C_4$ satisfies the condition $\mathbf{ (c)}$ of Theorem~\ref{1.11}.
\end{Example}

\begin{Example}
For $3\leq n\leq q$, assume $\alpha_1,\,\alpha_2,\,\cdots,\,\alpha_n$ are $n$ distinct elements of $\mathbb{F}_q$.
Let $C_5$ be the $[n,n-2]_q$-linear code with parity check matrix $$H_5=\left(\begin{array}{cccc}
1  &1  &\cdots&1\\
\alpha_1  & \alpha_2&\cdots&\alpha_n
\end{array}\right),$$
then we know $C_5$ is a $1$-symbol MDS code.
Hence $d_b(C_5)=\min\{b+2,n\}$ for every $1\leq b\leq n-1$ and $C_5$ satisfies the condition $\mathbf{ (d)}$ of Theorem~\ref{1.11}.
\end{Example}

\vskip 4mm

\noindent {\bf Acknowledgement.} This work was supported by NSFC (Grant Nos. 12271199, 12171191), The Fundamental Research Funds for the Central Universities (Grant No. 30106220482) and Nanyang Technological University Research (Grant No. 04INS000047C230GRT01).



\begin{thebibliography}{99}
\bibitem{BC1} P. Beelen, ``A note on the generalized Hamming weights of Reed-Muller
codes," Applicable Algebra in Engineering Communication Computing, vol. 30, no. 3, pp. 233-242, 2019.

\bibitem{BC} T. Berger and P. Charpin, ``The automorphism group of generalized Reed-Muller codes," Discrete Mathematics, vol. 117, pp. 1-17, 1993.

\bibitem{BN} T. Blackmore and G. H. Norton, ``Matrix-product codes over $\mathbb{F}_q$," Applicable Algebra in Engineering Communication Computing, vol. 12, no. 6, pp. 477-500, 2001.


\bibitem{CB} Y. Cassuto and M. Blaum, ``Codes for symbol-pair read channels," IEEE Transactions on Information Theory, vol. 57, no. 12, pp. 8011-8020, 2011.

\bibitem{CJK} Y. M. Chee, L. Ji, H. M. Kiah, C. Wang and J. Yin, ``Maximum distance separable codes for symbol-pair read channels," IEEE Transactions on Information Theory, vol. 59, no. 11, pp. 7259-7267, 2013.


\bibitem{CLL} B. Chen, L. Lin and H. Liu, ``Constacyclic symbol-pair codes: lower bounds and optimal constructions," IEEE Transactions on Information Theory, vol. 63, no. 12, pp. 7661-7666, 2017.

\bibitem{CL} B. Chen and H. Liu, ``New bounds on the code size of symbol-pair codes," IEEE Transactions on Information Theory, vol. 69, no. 2, pp. 941-950, 2023.

\bibitem{DG} P. Delsarte, J. M. Goethals and F.J. Mac Williams, ``On generalized Reed-Muller codes and their relatives," Information and Control, Vol. 16, no. 5, pp. 403-442, 1970.

\bibitem{DZ} B. Ding, T. Zhang and G. Ge, ``Maximum distance separable codes for $b$-symbol read channels," Finite Fields and Their Applications, vol. 49, pp. 180-197, 2018.

\bibitem{D}B. Ding, G. Ge, J. Zhang, T. Zhang and Y. Zhang, ``New constructions of MDS symbol-pair codes," Designs, Codes and Cryptography, vol. 86, no. 4, pp. 841-859, 2018.

\bibitem{DNSS} H. Q. Dinh, B. T. Nguyen, A. K. Singh and S. Sriboonchitta, ``On the symbol-pair distance of repeated-root constacyclic codes of prime power lengths," IEEE Transactions on Information Theory, vol. 64, no. 4, pp. 2417-2430, 2017.

\bibitem{DWLS}	 H. Q. Dinh, X. Wang, H. Liu and S. Sriboonchitta, ``On the symbol-pair distances of repeated-root constacyclic codes of length $2p^s$," Discrete Mathematics, vol. 342, no. 11, pp. 3062-3078, 2019.


\bibitem{FLL} Y. Fan, S Ling. and H. Liu, ``Matrix product codes over finite commutative Frobenius rings," Designs, Codes and Cryptography, vol. 71, pp. 201-227, 2014.

\bibitem{FLL1} Y. Fan, S Ling. and H. Liu, ``Homogeneous weights of matrix product codes over finite principal ideal rings," Finite Fields and Their Applications, vol. 64, pp. 247-267, 2014.

\bibitem{HP} P. Heijnen and R. Pellikaan, ``Generalized Hamming weights of $q$-ary Reed-Muller codes," IEEE Transactions on Information Theory, vol. 44, no. 1, pp. 181-196, 1998.



\bibitem{HL}F. Hernando, K. Lally and D. Ruano, ``Construction and decoding of matrix-product codes from nested codes," Applicable Algebra in Engineering, Communication and Computing, vol. 20, pp. 497-507, 2009.





\bibitem{HR} F. Hernando and D. Ruano, ``Decoding of matrix-product codes,"  Journal of Algebra and Its Applications, vol. 12, no. 4, pp. 1-15, 2013.





\bibitem{H} G. Hughes, ``Constacyclic codes, cocycles and a $(u + v|u-v)$ construction," IEEE Transactions on Information Theory, vol. 46, no. 2, pp. 674-680, 2000.

\bibitem{KZL} X. Kai, S. Zhu and P. Li, ``A construction of new MDS symbol-pair codes," IEEE Transactions on Information Theory, vol. 61, no. 11, pp. 5828-5834, 2015.

\bibitem{KZZLC} X. Kai, S. Zhu, Y. Zhao, H. Luo and Z. Chen, ``New MDS symbol-pair codes from repeated root codes," IEEE Communications Letters, vol. 22, no. 3, pp. 462-465, 2018.

\bibitem{KLP}  T. Kasamt, S. Lin, and W. W. Peterson, ``Polynomial codes," IEEE Transactions on Information Theory, vol. 14, pp. 807-814, 1968.

\bibitem{KLP1}  T. Kasamt, S. Lin, and W. W. Peterson, ``New generalisations of the Reed-Muller codes Part I Primitive codes," IEEE Transactions on Information Theory, vol. 14, no, 2, pp. 189-199, 1968.

\bibitem{KLP2}  T. Kasamt, S. Lin, and W. W. Peterson, ``New generalisations of the Reed-Muller codes Part II: Nonprimitive Codes," IEEE Transactions on Information Theory, vol. 14, no, 2, pp. 199-205, 1968.


\bibitem{CCDFM} Y. Cao, Y. Cao, H. Q. Dinh, F. W. Fu and P. Maneejuk, ``On matrix-product structure of repeated-root constacyclic codes over finite fields," Discrete Mathematics, vol. 343, no. 4, pp. 111768, 2020.

\bibitem{LG} S. Li and G. Ge, ``Constructions of maximum distance separable symbol-pair codes using cyclic and constacyclic codes," Designs, Codes and Cryptography, vol. 84, no. 3, pp. 359-372, 2017.

\bibitem{LL} H. Liu and J. Liu, ``Homogeneous metric and matrix product codes over finite commutative principal ideal rings," Finite Fields and Their Applications, vol. 64,  pp. 101666, 2020.

\bibitem{LP1} H. Liu and X. Pan, ``Galois hulls of linear codes over finite fields," Designs, Codes and Crytography, vol. 88, no. 2, pp. 241-255, 2020.

\bibitem{LP}  H. Liu and X. Pan, ``Generalized pair weights of linear codes and linear isomorphisms preserving pair weights," IEEE Transactions on Information Theory, vol. 68, no. 1, pp. 105-117, 2022.

\bibitem{LP2} H. Liu and X. Pan, ``Generalized $b$-symbol weights of linear codes and $b$-symbol MDS Codes," IEEE Transactions on Information Theory, vol. 69, no. 4, pp. 2311-2323, 2023.

\bibitem{LEL} G. Luo, M. F. Ezerman, S. Ling and X. Pan, ``New families of MDS symbol-pair codes from matrix-product codes," IEEE Transactions on Information Theory, vol. 69, no. 3, pp. 1567-1587, 2023

\bibitem{ML1} J. Ma and J. Luo, ``MDS symbol-pair codes from repeated-root cyclic codes," Designs, Codes and Cryptography, vol. 90, no. 1, pp. 121-137, 2022.

\bibitem{ML2}  J. Ma and J. Luo, ``“Constructions of MDS symbol-pair codes with minimum distance seven or eight," Designs, Codes and Cryptography, vol. 90, no. 10, pp. 2337-2359, 2022.


\bibitem{MS}  F. J. MacWilliams and N. J. A. Sloane, ``The Theory of Error-Correcting Codes," North-Holland Publishing Company, 1977.

\bibitem{OS} F. \"{O}zbudak and H. Stichtenoth, ``Note on Niederreiter-Xing's propagation rule for linear codes," Applicable Algebra in Engineering, Communication and Computing, vol. 13, pp. 53-56, 2002.

\bibitem{R} A. Romanov,  ``On perfect and Reed-Muller codes over finite fields,"  Problems of Information Transmission, vol. 57, pp. 199-211, 2021.

\bibitem{SO} M. Shi, F. \"{O}zbudak and P. Sol\'{e}, ``Geometric approach to $b$-symbol Hamming weights of cyclic codes," IEEE Transactions on Information Theory, vol. 67, no. 6, pp. 3735-3751, 2021.



\bibitem{VV} E. van Eupen and J. H. van Lint, ``On the minimum distance of ternary cyclic codes," IEEE Transactions on Information Theory," vol. 39, no. 2, pp. 409-422, 1993.


\bibitem{V} B. Van Asch, ``Matrix-product codes over finite chain rings," Applicable Algebra in Engineering, Communication and Computing, vol. 19, no. 1, pp. 39-49, 2008.


\bibitem{V} G. Vega, ``The $b$-symbol weight distributions of all semiprimitive irreducible cyclic codes," Designs, Codes and Cryptography, 2023 (doi.org/10.1007/s10623-023-01193-w).



\bibitem{YBS}	E. Yaakobi, J. Bruck, and P. H. Siegel, ``Constructions and decoding of cyclic codes over $b$-symbol read channels," IEEE Transactions on Information Theory, vol. 62, no. 4, pp. 1541-1551, 2016.

\bibitem{ZSO} H. Zhu, M. Shi and F. \"{O}zbudak, ``Complete $b$-symbol weight distribution of some irreducible cyclic codes," Designs, Codes and Cryptography, vol. 90, pp. 1113-1125, 2022.





\end{thebibliography}
\end{document}